\DeclareMathAlphabet{\pazocal}{OMS}{zplm}{m}{n}
\newtheorem{theorem}{Theorem} 
\newtheorem{corollary}{Corollary}
\theoremstyle{definition} 
\newtheorem{example}{Example}
\theoremstyle{remark}    
\newcommand{\norm}[1]{\lVert#1\rVert}
\begin{document}
\title{Quantum Separability Criteria Based on Symmetric Measurements}
\author{Yu Lu$^{1}$}%
\author{Wen Zhou$^{1}$}
\author{Meng Su$^{1}$}%
\author{Hong-Xing Wu$^{1, 2}$}
\author{Shao-Ming Fei$^{1}$}
\email[]{feishm@cnu.edu.cn}
\author{Zhi-Xi Wang$^{1}$}
\email[]{wangzhx@cnu.edu.cn}
\affiliation{%
$^{1}$School of Mathematical Sciences, Capital Normal University, Beijing 100048, China\\
$^{2}$School of Mathematics and Computational Science, Shangrao Normal University, Shangrao 334001, China}

\begin{abstract}
We propose experimentally feasible separability criteria for bipartite systems based on local symmetric measurements. Through detailed examples, we demonstrate that our criteria can detect
entanglement more effectively compared to existing counterparts. Furthermore,
we demonstrate the potential for our results to be generalized to general multipartite
systems.
\end{abstract}


\maketitle

\section{Introduction}
Quantum entanglement plays significant roles in various quantum tasks such as quantum cryptographic protocols \cite{horodecki2009quantum,nielsen2010quantum,chu2021nonlinear,ekert1991quantum,deutsch1996quantum,hillery1999quantum,long2002theoretically,gao2005deterministic,shao2016entanglement,gao2016dynamics}, quantum teleportation \cite{bennett1993teleporting,gao2008optimal}, dense coding \cite{bennett1992communication} and quantum technologies \cite{xie2016quantum,wang2016experimental,li2016quantum,wang2016quantum,ding2016high,deng2017quantum}. However, the detection of quantum entanglement has always been a great challenge.
Although many important results in the characterization of entanglement for bipartite systems \cite{peres1996separability,chen2002matrix,horodecki1999reduction,guhne2007covariance} and multipartite systems \cite{huber2014witnessing,gao2010detection,gao2011separability,guhne2010separability,huber2010detection,gao2014permutationally,gao2013efficient,hong2016detecting,hong2012measure} have been presented, efficient and universal methods for detecting entanglement remain a central challenge in quantum information science.

There have been many separability criteria such as positive partial transposition (PPT) criterion \cite{peres1996separability,horodecki1996necessary,horodecki1997separability}, realignment criterion \cite{rudolph2003some,horodecki2006separability,chen2002generalized,albeverio2003generalized}, covariance matrix criterion \cite{guhne2007covariance} and correlation matrix criterion \cite{de2008further}. Some classes of quantum entanglement criteria  based on measurements have also been derived. One of them is mutually unbiased bases (MUBs), which are fundamentally connected to the nature of quantum information \cite{wootters1989optimal,wehner2010entropic,barnett2009quantum,durt2010mutually}. In Ref.\cite{spengler2012entanglement}, the authors utilized MUBs to derive separability criteria for bipartite and multipartite systems and continuous-variable quantum systems. For $d\times d$-dimensional bipartite systems, with $d$ a prime power, this criterion is both necessary and sufficient for isotropic states.  The concept of MUBs has been generalized to mutually unbiased measurements (MUMs) \cite{kalev2014mutually}. A complete set of $d+1$ MUMs  can be  constructed in $d$-dimensional Hilbert spaces \cite{kalev2014mutually}, regardless of whether $d$ is a prime power or not. In Ref.\cite{chen2014entanglement}, the authors presented a separability criterion based on MUMs for arbitrary bipartite systems.

Another well-known class of quantum measurements is  symmetric informationally complete positive operator-valued measures (SICPOVMs). In Ref.\cite{shang2018enhanced}, based on SICPOVMs the authors provided a separability criterion which is stronger than those based on local measurements of orthogonal observables. In Ref.\cite{appleby2007symmetric}, the authors  introduced the concept of general SICPOVMs (GSICPOVMs) in which the measurement  operators are not necessarily of rank $1$. In Ref.\cite{gour2014construction}, the authors proposed a method to construct the measurement operators by using generalized Gell-Mann matrices.
Based on GSICPOVMs, many new separability criteria have been derived for both bipartite systems and multipartite systems \cite{liu2017separability,xi2016entanglement,chen2015general,lai2018entanglement}.
The framework of symmetric $(N,M)$ POVMs introduced in Ref.\cite{Siudzinska2022}, which unifies MUMs and GSICPOVMs. Within this framework, the authors derived new entanglement criteria. Additionally, Ref.\cite{wang2025symmetric} used the framework to establish lower bounds on concurrence.

In this paper, we present a separability criterion   by introducing matrices given by the probabilities of local $(N,M)$ POVMs measurement outcomes. The rest of the paper is arranged as follows. In Sec. \ref{sec:2}, we recall some basic notions and properties of $(N,M)$ POVMs. In Sec. \ref{sec:3}, we propose separability criteria based on $(N, M)$ POVMs for bipartite quantum systems. In  Sec. \ref{sec:4}, we generalize the results of bipartite systems to tripartite and multipartite quantum systems.  Some concluding remarks are given in Sec. \ref{sec:5}.

\section{Preliminary}\label{sec:2}
The relevant definition of $(N,M)$ POVMs is elaborated in Ref.\cite{Siudzinska2022}, where an $(N,M)$ POVM is given by $N$ $d$-dimensional POVMs, $\{E_{\alpha,k} \mid k=1,2,\ldots,M\}$ ($\alpha=1,2,\ldots,N$), each with $M$ measurement operators satisfying
\begin{flalign}\nonumber\label{Eq:1}
\mathrm{tr}(E_{\alpha,k}) &= \dfrac{d}{M}, \\ \nonumber
\mathrm{tr}(E_{\alpha,k}^{2}) &= x,\\
\mathrm{tr}(E_{\alpha,k}E_{\alpha,l}) &= \dfrac{d-Mx}{M(M-1)}~~(l\neq k),\\
\mathrm{tr}(E_{\alpha,k}E_{\beta,l}) &= \dfrac{d}{M^{2}}~~(\beta\neq\alpha),\nonumber
\end{flalign}
where the parameter $x$ satisfies the inequality $\dfrac{d}{M^{2}}<x\leq \min\left\{\dfrac{d^{2}}{M^{2}},\dfrac{d}{M}\right\}$.
An $(N,M)$ POVM is called informationally complete if $N(M-1)=d^{2}-1$. Let $\{G_{0}=I_{d}/\sqrt{d},\,G_{\alpha,k}|~\alpha=1,\ldots,N;\,k=1,\ldots,M-1\}$
be an orthonormal Hermitian operator basis such that $\mathrm{tr}(G_{\alpha,k})=0$, where $I_{d}$ denotes the $d\times d$ identity operator. The measurement operators of the informationally complete $(N,M)$ POVM
have the following explicit expressions:
\begin{equation}\label{Eq:2}
E_{\alpha,k}=\dfrac{1}{M}I_{d}+tH_{\alpha,k},
\end{equation}
where
\begin{flalign}\label{Eq:3}
 	H_{\alpha,k}=\begin{cases}
 		G_{\alpha}-\sqrt{M}(\sqrt{M}+1)G_{\alpha,k}~~&(k=1,\cdots,M-1),\\[1mm]
 		(\sqrt{M}+1)G_{\alpha}~~&(k=M),
 	\end{cases}
 \end{flalign}
with $G_{\alpha}=\sum\limits_{k=1}^{M-1}G_{\alpha,k}$. The positivity of $E_{\alpha,k}$ implies that
 \begin{equation*}
 	-\dfrac{1}{M}\dfrac{1}{\lambda_{\max}}\leq t \leq \dfrac{1}{M}\dfrac{1}{|\lambda_{\min}|},
 \end{equation*}
where $\lambda_{\text{max}}$ and $\lambda_{\text{min}}$ are the minimal and maximal eigenvalues, respectively, among the eigenvalues of all $H_{a,k}$. The parameters $t$ and $x$ subject to the following relation, then
 \begin{equation}\label{Eq:4}
 	x=\dfrac{d}{M^{2}}+t^{2}(M-1)(\sqrt{M}+1)^{2}.
 \end{equation}

{\it GSICPOVMs}~~ If $N = 1$ and $M = d^{2}$, the $(N,M)$ POVMs reduces                                                                                                                                                                                                           to GSICPOVMs \cite{appleby2007symmetric, gour2014construction}, given by $d^2$ positive semidefinite operators $\{M_k\}_{\alpha =1}^{d^2}$ satisfying
\begin{flalign}\label{GSIC}
 	\mathrm{tr}(M_{k}) &= \dfrac{1}{d}, \nonumber \\
 	\mathrm{tr}(M_{k}^{2}) &= a,\\
 	\mathrm{tr}(M_{k}M_{l}) &= \frac{1-da}{d(d^2-1)}~~(l\neq k),\nonumber
\end{flalign}
where the parameter $a$ satisfies the relation $\frac{1}{d^3} <a \leq \frac{1}{d^2}$. When $a={1}/{d^2}$, $M_\alpha$ are rank $1$ projectors and the GSICPOVMs become SICPOVMs.
Let $\{F_k\}_{k=1}^{d^2-1}$ be a set of $d^2-1$ Hermitian and traceless operators on the complex vector space $\mathbb{C}^d$, one has \cite{gour2014construction}
\begin{equation}\label{Eq:6}
M_{k}=\dfrac{1}{M}I_{d}+t\mathbb{F}_{k},
\end{equation}
where
\begin{flalign}\label{Eq:7}
 	\mathbb{F}_{k}=\begin{cases}
 		F-d(d+1)F_k~~&(k=1,\cdots,d^2-1),\\[1mm]
 		(d+1)F~~&(k=d^2),
 	\end{cases}
 \end{flalign}
with $F=\sum\limits_{k=1}^{d^2-1}F_{k}$. To ensure  $M_{k}\geq 0$ and $t$ must satisfy the following relation,
 \begin{equation}\label{Eq:8}
 	-\dfrac{1}{d^2}\dfrac{1}{\lambda_{\max}}\leq t \leq \dfrac{1}{d^2}\dfrac{1}{|\lambda_{\min}|},
 \end{equation}
where $\lambda_{\rm max}$ and $\lambda_{\rm min}$ are the minimal and maximal eigenvalues among the eigenvalues  of $F_{k}$ for all $k$. Additionally, the parameter $a$ in Eq. (\ref{GSIC}) is given by
\begin{align}\label{Eq:9}
a=\frac{1}{d^3}+t^2(d-1)(d+1)^3.
\end{align}

{\it MUMs}~~If $N=d+1$ and $M=d$,  the $(N,M)$ POVMs reduce to the MUMs \cite{kalev2014mutually}.
A set of POVMs $\{\pazocal{P}^{(b)}\}_{b=1}^{d+1}$ on $\mathbb{C}^d$, where $\pazocal{P}^{(b)}=\{P_{n}^{(b)}\}_{n=1}^{d}$, is called MUMs if
\begin{equation}\label{MUM}
\begin{split}
\mathrm{tr}(P_{n}^{(b)})&=1,\\
\mathrm{tr}(P_{n}^{(b)})^2&=\kappa,\\
\mathrm{tr}(P_{n}^{(b)}P_{n'}^{(b)})&=\frac{1-\kappa}{d-1} ~~~(n\neq n'),\\
\mathrm{tr}(P_{n}^{(b)}P_{n'}^{(b')})&=\frac{1}{d}~~~(b\neq b'),
\end{split}
\end{equation}
where the parameter $\kappa$ satisfies the inequality $\frac{1}{d} < \kappa \leq 1$. Let $\{F_{n,b}:n=1,2,\cdots,d-1; b=1,2,\cdots,d+1\}$ be
$d^2 - 1$ Hermitian and traceless operators such that $\mathrm{tr}(F_{n,b}F_{n',b'})=\delta_{n,n'}\delta_{b,b'}$. MUMs have the following expressions:
\begin{equation}\label{eq2}
P_{n}^{(b)}=\frac{1}{d}I+tF_{n}^{(b)},
\end{equation}
where
\begin{equation}\label{eq1}
F_{n}^{(b)}=
\begin{cases}
   F^{(b)}-(d+\sqrt{d})F_{n,b}~~&(n=1,2,\cdots,d-1),\\[2mm]
   (1+\sqrt{d})F^{(b)}~~&(n=d),
\end{cases}
\end{equation}
with $F^{(b)}=\sum_{n=1}^{d-1}F_{n,b}$, $b=1,2,\ldots,d+1$. To ensure  $P_{n}^{(b)} \geq 0$ and the parameter $t$ must satisfy the inequality
\begin{equation}\label{Eq:13}
-\frac1{d}\frac1{\lambda_{\rm max}}\leq t\leq\frac1{d}\frac1{|\lambda_{\rm min}|},
\end{equation}
where $\lambda_{\rm min} = \min_b\lambda_{\rm min}^{(b)}$ and $\lambda_{\rm max} = \max_b\lambda_{\rm max}^{(b)}$, with $\lambda_{\rm min}^{(b)}$ ($\lambda_{\rm max}^{(b)}$) being the largest positive (smallest negative) eigenvalues of $F_n^{(b)}$ for all $n = 1,\ldots,d$ given in Eq.(\ref{eq1}).
To construct MUMs in Eq.(\ref{eq2}), the parameter $\kappa$ in Eq.(\ref{MUM}) is given by
\begin{equation}\label{Eq:14}
\kappa=\frac{1}{d}+t^{2}(1+\sqrt{d})^{2}(d-1).
\end{equation}

Subsequently, building upon these new concepts, a series of efficient entanglement criteria based on MUMs were derived in Refs.\cite{chen2014entanglement,liu2017separability,shen2015entanglement,shen2018improved}. More elegant criteria have also been obtained in terms of SICPOVMs, GSICPOVMs and $(N,M)$ POVMs, as summarized in Table \ref{tab:1}.

\begin{table}[h!]
\centering
\caption{Some related separability criteria}
\begin{tabular}{cccc}
\hline
 &  MUMs & GSICPOVMs & $(N,M)$ POVMs \\
\hline
Based on the trace \\of measurement operators & \cite{chen2014entanglement} \quad& \cite{chen2015general} \quad&\cite{Siudzinska2022} \\
\hspace{1mm}\\
Based on measurements \\on $(\rho - \rho^A\otimes\rho^B)$ & \cite{shen2015entanglement} \quad& \cite{shen2015entanglement} \quad&\cite{tang2023improved}  \\
\hspace{1mm}\\
Multipartite and bipartite systems \\with different individual dimensions & \cite{liu2015separability} \quad& \cite{xi2016entanglement} \quad&\cite{tang2023enhancing} \\
\hspace{1mm}\\
For dimensions \(d'=sd+r\) & \cite{liu2017separability} \quad& \cite{liu2017separability} \quad&\cite{tang2023improved} \\
\hspace{1mm}\\
Criteria based on trace norm &\cite{shen2018improved} \quad& \cite{lai2018entanglement} \quad&\cite{Siudzinska2022} \\
\hline
\end{tabular}
\label{tab:1}
\end{table}

\section{Detecting entanglement of bipartite states}\label{sec:3}

Let $H_X$ be a $d_X$-dimensional Hilbert space associated with system $X$. A bipartite state $\rho\in H_{A}\otimes H_{B}$ is said to be separable if it can be represented as a convex sum of tensor products of the local states of the subsystems, which is
\begin{equation}\label{sep}
\rho_{AB}=\sum_{i}p_i\rho^{i}_{A}\otimes \rho^{i}_{B},
\end{equation}
where $p_i\geq 0$ and $\sum_i p_i=1$. Otherwise $\rho$ is said to be entangled.

Consider local $(N,M)$ POVMs on a bipartite state $\rho\in H_{A}\otimes H_{B}$. With respect to the measurement operators $E_{\alpha, k} \otimes E_{\beta, j}$,  we denote the matrix $\mathcal{P}(\rho)$ with entries given by the probabilities $\mathrm{tr}[(E_{\alpha, k} \otimes E_{\beta, j})\rho]$, where $\alpha, k$ and $\beta, j$ are the row and the column indices, respectively. Define
\begin{align}\label{eq:3}
\mathcal{M}_{\mu,\nu}^{l}(\rho_{AB})=\begin{pmatrix}
	\mu\nu J_{l\times l}& \mu \omega_{l}(\sigma)^T\\[1mm]
	\nu \omega_{l}(\tau)& \mathcal{P}(\rho_{AB})
\end{pmatrix},
\end{align}
where
\begin{align}\label{Eq:17}
\tau = \left( \begin{array}{c}
\mathrm{tr}(E_{1,1} \rho_A) \\
\mathrm{tr}(E_{1,2} \rho_A) \\
\vdots \\
\mathrm{tr}(E_{1,M_A} \rho_A) \\
\mathrm{tr}(E_{2,1} \rho_A) \\
\mathrm{tr}(E_{2,2} \rho_A) \\
\vdots \\
\mathrm{tr}(E_{2,M_A} \rho_A) \\
\vdots \\
\mathrm{tr}(E_{N_A,M_A} \rho_A)
\end{array} \right), \qquad
\sigma = \left( \begin{array}{c}
\mathrm{tr}(E_{1,1} \rho_B) \\
\mathrm{tr}(E_{1,2} \rho_B) \\
\vdots \\
\mathrm{tr}(E_{1,M_B} \rho_B) \\
\mathrm{tr}(E_{2,1} \rho_B) \\
\mathrm{tr}(E_{2,2} \rho_B) \\
\vdots \\
\mathrm{tr}(E_{2,M_B} \rho_B) \\
\vdots \\
\mathrm{tr}(E_{N_B,M_B} \rho_B)
\end{array} \right),
\end{align}
with $\rho_{A}$ ($\rho_{B}$) being the reduced density matrix obtained by tracing over the subsystem $H_A$ ($H_B$) of $\rho_{AB}$. $\mu$ and $\nu$ are real numbers, $l$ is a positive integer, $J_{l \times l}$ is the matrix with all $l \times l$ entries being $1$, $\omega_{l}(X)=\underbrace{(X, \ldots, X)}_{l}$, and $T$ denotes the transpose of a matrix. Thus, we have the following separability criterion:

\begin{theorem}\label{th:1}
If a bipartite state $\rho_{AB}\in H_{A}\otimes H_{B}$ is separable, then
\begin{eqnarray}\label{eq:5}
\left\|\mathcal{M}_{\mu, \nu}^l \left(\rho_{AB}\right)\right\|_{\mathrm{tr}} \leq \sqrt{\left(l \mu^{2}+\frac{(d_{A}-1)(d_{A}^2+M_{A}^2x_A)}{d_{A}M_{A}(M_{A}-1)}\right)\left(l \nu^{2}+\frac{(d_{B}-1)(d_{B}^2+M_{B}^2x_B)}{d_{B}M_{B}(M_{B}-1)}\right)},
\end{eqnarray}
where $x_A$ ($x_B$) is the parameter in the $(N,M)$ POVMs (\ref{Eq:1}) on subsystem $H_A$ ($H_B$); $\|G\|_{\mathrm{tr}}=\mathrm{tr}\left(\sqrt{G^{\dagger} G}\right)$ is the trace norm of a matrix $G$.
\end{theorem}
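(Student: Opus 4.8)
The plan is to reduce the separable case to a convex combination of genuinely rank-one matrices, estimate each summand's trace norm by a quantity that does not depend on the summand, and sum. Write $\rho_{AB}=\sum_i p_i\,\rho_A^{i}\otimes\rho_B^{i}$. By linearity of the partial trace, $\rho_A=\sum_i p_i\rho_A^{i}$ and $\rho_B=\sum_i p_i\rho_B^{i}$, so the probability vectors $\tau,\sigma$ in \eqref{Eq:17} satisfy $\tau=\sum_i p_i\tau^{i}$ and $\sigma=\sum_i p_i\sigma^{i}$, where $\tau^{i},\sigma^{i}$ are defined exactly as $\tau,\sigma$ but with $\rho_A,\rho_B$ replaced by $\rho_A^{i},\rho_B^{i}$. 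On a product state the joint probabilities factorize, $\mathrm{tr}\big[(E_{\alpha,k}\otimes E_{\beta,j})(\rho_A^{i}\otimes\rho_B^{i})\big]=\mathrm{tr}(E_{\alpha,k}\rho_A^{i})\,\mathrm{tr}(E_{\beta,j}\rho_B^{i})$, hence $\mathcal{P}(\rho_A^{i}\otimes\rho_B^{i})=\tau^{i}(\sigma^{i})^{T}$. Writing $\mathbf{1}_l$ for the all-ones column vector of length $l$, so that $J_{l\times l}=\mathbf{1}_l\mathbf{1}_l^{T}$ and $\omega_l(X)=X\mathbf{1}_l^{T}$, one assembles the four blocks in \eqref{eq:3} into
\begin{equation*}
\mathcal{M}_{\mu,\nu}^{l}(\rho_{AB})=\sum_i p_i\,u_i v_i^{T},\qquad
u_i=\begin{pmatrix}\mu\,\mathbf{1}_l\\[1mm] \tau^{i}\end{pmatrix},\qquad v_i=\begin{pmatrix}\nu\,\mathbf{1}_l\\[1mm] \sigma^{i}\end{pmatrix},
\end{equation*}
i.e.\ a convex combination of the rank-one matrices $u_i v_i^{T}$.

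Since the trace norm of a rank-one matrix equals the product of the Euclidean norms of its factors, $\|u_i v_i^{T}\|_{\mathrm{tr}}=\|u_i\|_2\|v_i\|_2$, the triangle inequality for $\|\cdot\|_{\mathrm{tr}}$ yields
\begin{equation*}
\big\|\mathcal{M}_{\mu,\nu}^{l}(\rho_{AB})\big\|_{\mathrm{tr}}\le\sum_i p_i\|u_i\|_2\|v_i\|_2
=\sum_i p_i\sqrt{\,l\mu^{2}+\|\tau^{i}\|_2^{2}\,}\;\sqrt{\,l\nu^{2}+\|\sigma^{i}\|_2^{2}\,}.
\end{equation*}
Thus it suffices to prove the $i$-independent bounds $\|\tau^{i}\|_2^{2}\le B_A$ and $\|\sigma^{i}\|_2^{2}\le B_B$ with $B_X=\frac{(d_X-1)(d_X^{2}+M_X^{2}x_X)}{d_X M_X(M_X-1)}$: then each summand is $\le\sqrt{(l\mu^{2}+B_A)(l\nu^{2}+B_B)}$ and $\sum_i p_i=1$ gives \eqref{eq:5}.

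The core estimate is: for any state $\rho$ on a $d$-dimensional space carrying an $(N,M)$ POVM with parameter $x$, $\sum_{\alpha=1}^{N}\sum_{k=1}^{M}[\mathrm{tr}(E_{\alpha,k}\rho)]^{2}\le\frac{(d-1)(d^{2}+M^{2}x)}{dM(M-1)}$. I would substitute $E_{\alpha,k}=\frac1M I_d+tH_{\alpha,k}$ from \eqref{Eq:2}, use $\sum_{\alpha,k}H_{\alpha,k}=0$ (immediate from \eqref{Eq:3}) to annihilate the cross term so that $\sum_{\alpha,k}[\mathrm{tr}(E_{\alpha,k}\rho)]^{2}=\frac{N}{M}+t^{2}\sum_{\alpha,k}[\mathrm{tr}(H_{\alpha,k}\rho)]^{2}$, expand $\rho=\frac1d I_d+\sum_{\beta,m}r_{\beta,m}G_{\beta,m}$ in the orthonormal basis, and compute with \eqref{Eq:3} and orthonormality that $\sum_{\alpha,k}[\mathrm{tr}(H_{\alpha,k}\rho)]^{2}=M(\sqrt{M}+1)^{2}\big(\mathrm{tr}(\rho^{2})-\tfrac1d\big)$; finally invoke $\mathrm{tr}(\rho^{2})\le1$, $N(M-1)\le d^{2}-1$, and relation \eqref{Eq:4} to rewrite the result in the stated closed form (with equality when the POVM is informationally complete and $\rho$ is pure). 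Applying this to $\rho_A^{i}$ and $\rho_B^{i}$ supplies the two bounds and completes the argument.

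The main obstacle is the first step: seeing that bordering $\mathcal{P}(\rho_{AB})$ by the constant block $\mu\nu J_{l\times l}$ together with $\nu\,\omega_l(\tau)$ and $\mu\,\omega_l(\sigma)^{T}$ is precisely what collapses each product term into an honest rank-one matrix $u_i v_i^{T}$, which is the source of the criterion's strength and must be checked block by block using the factorization of product-state probabilities and the linearity of $\tau,\sigma$ in the local states. The trace-norm inequality and the Euclidean-norm bound on $\tau^{i},\sigma^{i}$ are then routine computations.
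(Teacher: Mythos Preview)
Your proof is correct and follows essentially the same route as the paper's: recognize that on a product state the bordered matrix $\mathcal{M}_{\mu,\nu}^l$ is rank one, bound its trace norm by the product of the Euclidean norms of its two factors, and control those norms via the index-sum inequality \eqref{re:1}. Your version is actually more complete in two respects: you extend explicitly from pure product states to arbitrary separable states via the triangle inequality (the paper writes out only the pure-product case and leaves the convexity step implicit), and you derive the key bound \eqref{re:1} from scratch using the explicit form \eqref{Eq:2}--\eqref{Eq:4} of the $(N,M)$ POVM, whereas the paper simply cites it.
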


\begin{proof}
If $\rho_{AB}$ is a separable pure state, $\rho_{AB}=\rho_{A} \otimes \rho_{B}$, we have  $\mathcal{P}(\rho_{AB}) = \tau \sigma^{T}$. It has been shown in Ref.\cite{Siudzinska2022} that the following inequality holds for an arbitrary state $\varrho_{AB}$:
\begin{equation}\label{re:1}
\sum_{\alpha = 1}^{N} \sum_{k = 1}^{M} p_{\alpha,k}^2 \leq \dfrac{(d-1)(xM^{2}+d^{2})}{dM(M-1)},
\end{equation}
where $p_{\alpha,k} = \mathrm{tr}(\varrho_{AB} E_{\alpha,k})$. Therefore, we have
\begin{align}\label{Eq:20}
\|M_{\mu,\nu}^l\left(\rho_{AB}\right)\|_{\mathrm{tr}}
&=\left\|\left(\begin{array}{cc}\mu\nu J_{l\times l} & \mu \omega_{l}(\sigma)^T \\ \nu \omega_{l}(\tau)  & \tau\sigma ^\mathrm{T}\end{array}\right)\right\|_{\mathrm{tr}} \nonumber\\
&=\left\|\left(\begin{array}{cc}\mu J_{l\times 1} & \nonumber\\ \tau & \end{array}\right)\left(\begin{array}{cc}\nu J_{1\times l} & \sigma^\mathrm{T}\end{array}\right)\right\|_{\mathrm{tr}} \nonumber\\
&\leq\left\|\left(\begin{array}{cc}\mu J_{l\times 1} & \nonumber\\ \tau & \end{array}\right)\right\|_{\mathrm{tr}}\left\|\left(\begin{array}{cc}\nu J_{1\times l} & \sigma^\mathrm{T}\end{array}\right)\right\|_{\mathrm{tr}} \nonumber\\
&=\left\|\left(\begin{array}{cc}\mu J_{l\times 1} & \nonumber\\ \tau & \end{array}\right)\right\|_{\mathrm{tr}}\left\|\left(\begin{array}{cc}\nu J_{1\times l} & \sigma^\mathrm{T}\end{array}\right)\right\|_{\mathrm{tr}} \nonumber\\
&\leq \sqrt{\left(l \mu^{2}+\frac{(d_{A}-1)(d_{A}^2+M_{A}^2x_A)}{d_{A}M_{A}(M_{A}-1)}\right)\left(l \nu^{2}+\frac{(d_{B}-1)(d_{B}^2+M_{B}^2x_B)}{d_{B}M_{B}(M_{B}-1)}\right)},
\end{align}
where the first inequality is due to $\parallel A  B \parallel_{\mathrm{tr}} \leq \parallel A \parallel_{\mathrm{tr}} \parallel B \parallel_{\mathrm{tr}}$ for any matrices $A$ and $B$, and the last inequality is from (\ref{re:1}).
\end{proof}

{\it Remark} When both parameters $\mu$ and $\nu$ are zero, $\mu =\nu=0$, Theorem \ref{th:1} reduces to the conclusion presented in Ref.\cite{tang2023enhancing}, namely,
\begin{eqnarray}\label{Eq:21}
\|\mathcal{P}(\rho_{AB})\|_{\mathrm{tr}}\leq  \sqrt{\left(\frac{(d_{A}-1)(d_{A}^2+M_{A}^2x_A)}{d_{A}M_{A}(M_{A}-1)}\right)
\left(\frac{(d_{B}-1)(d_{B}^2+M_{B}^2x_B)}{d_{B}M_{B}(M_{B}-1)}\right)}
\end{eqnarray}
if $\rho_{AB}$ is separable, where $\|\mathcal{P}\|_{\mathrm{tr}}=\mathrm{tr} \sqrt{\mathcal{P}^{\dagger}\mathcal{P}}$.

In particular, as special $(N,M)$ POVMs, the GSICPOVMs and MUMs give rise to the following corollaries correspondingly:

\begin{corollary}\label{co:1}
Consider two GSICPOVMs $\{M_{\alpha}\}_{\alpha=1}^{d_A^{2}}$ with parameter $a_A$ and $\{M_{\beta}\}_{\beta=1}^{d_B^{2}}$ with parameter $a_B$.
If a state $\rho_{AB}\in H_{A}\otimes H_{B}$ is separable, then
\begin{eqnarray}\label{eq:6}
\left\|\mathcal{M}_{\mu, \nu}^l \left(\rho_{AB}\right)\right\|_{\mathrm{tr}} \leq \sqrt{\left(l \mu^{2}+\frac{a_{A}d_{A}^{2}+1}{d_{A}(d_{A}+1)}\right)\left(l \nu^{2}+\frac{a_{B}d_{B}^{2}+1}{d_{B}(d_{B}+1)}\right)},
\end{eqnarray}
where \begin{align}\label{eq:7}
\mathcal{M}^l_{\mu,\nu}(\rho_{AB})=\begin{pmatrix}
	\mu\nu J_{l\times l}& \mu \omega_{l}(\zeta)^T\\[1mm]
	\nu \omega_{l}(\varsigma)& \mathcal{G}(\rho_{AB})
\end{pmatrix},
\end{align}
with
\begin{eqnarray}\label{Eq:24}
\varsigma = \left( \begin{array}{c}
\mathrm{tr}(M_1^A \rho_A) \\
\mathrm{tr}(M_2^A \rho_A) \\
\vdots \\
\mathrm{tr}(M_{d^2_A}^A \rho_A)
\end{array} \right), \qquad
\zeta = \left( \begin{array}{c}
\mathrm{tr}(M_1^B \rho_B) \\
\mathrm{tr}(M_2^B \rho_B) \\
\vdots \\
\mathrm{tr}(M_{d^2_B}^B \rho_B)
\end{array} \right),
\end{eqnarray}
$\mathcal{G}(\rho_{AB})$ is a matrix with entries given by $\mathrm{tr}(M_{\alpha}\otimes M_{\beta}\rho_{AB})$.
\end{corollary}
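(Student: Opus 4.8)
The plan is to obtain Corollary~\ref{co:1} as a direct specialization of Theorem~\ref{th:1}. Recall from Sec.~\ref{sec:2} that a GSICPOVM on a $d$-dimensional system is precisely an $(N,M)$ POVM with $N=1$ and $M=d^{2}$, the parameter $x$ in (\ref{Eq:1}) being identified with the GSICPOVM parameter $a$ in (\ref{GSIC}). Under this identification the single block index $\alpha$ collapses, so the vectors $\tau,\sigma$ of (\ref{Eq:17}) become the vectors $\varsigma,\zeta$ of (\ref{Eq:24}), the probability matrix $\mathcal{P}(\rho_{AB})$ becomes $\mathcal{G}(\rho_{AB})$, and hence the block matrix $\mathcal{M}^{l}_{\mu,\nu}$ in (\ref{eq:7}) is literally the matrix $\mathcal{M}^{l}_{\mu,\nu}$ in (\ref{eq:3}) for this choice of local POVMs. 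Therefore it suffices to substitute $M_{A}=d_{A}^{2}$, $x_{A}=a_{A}$ (and likewise $M_{B}=d_{B}^{2}$, $x_{B}=a_{B}$) into the right-hand side of (\ref{eq:5}).

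First I would perform this substitution and simplify. Writing $d_{A}^{2}-1=(d_{A}-1)(d_{A}+1)$, the $A$-factor becomes
\[
\frac{(d_{A}-1)\bigl(d_{A}^{2}+d_{A}^{4}a_{A}\bigr)}{d_{A}\,d_{A}^{2}\,(d_{A}^{2}-1)}
=\frac{(d_{A}-1)\,d_{A}^{2}\,(1+d_{A}^{2}a_{A})}{d_{A}^{3}\,(d_{A}-1)(d_{A}+1)}
=\frac{a_{A}d_{A}^{2}+1}{d_{A}(d_{A}+1)},
\]
and the identical computation gives $\dfrac{a_{B}d_{B}^{2}+1}{d_{B}(d_{B}+1)}$ for the $B$-factor. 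Plugging these back into (\ref{eq:5}) yields exactly the bound (\ref{eq:6}), which proves the corollary.

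Alternatively, and for completeness, one can rerun the argument in the proof of Theorem~\ref{th:1} in the GSICPOVM language: for a product pure state $\rho_{AB}=\rho_{A}\otimes\rho_{B}$ one has $\mathcal{G}(\rho_{AB})=\varsigma\zeta^{\mathrm{T}}$, so $\mathcal{M}^{l}_{\mu,\nu}(\rho_{AB})$ factors as the column block $(\mu J_{l\times 1};\,\varsigma)$ times the row block $(\nu J_{1\times l},\,\zeta^{\mathrm{T}})$; submultiplicativity of the trace norm (as used there) together with the GSICPOVM analogue of (\ref{re:1}) --- namely $\sum_{\alpha=1}^{d^{2}}[\mathrm{tr}(\varrho\,M_{\alpha})]^{2}\le \frac{a d^{2}+1}{d(d+1)}$, which is (\ref{re:1}) evaluated at $M=d^{2}$, $x=a$ --- gives the claim for product pure states, and convexity of the trace norm extends it to all separable states of the form (\ref{sep}).

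Since this is a specialization, there is no genuine obstacle: the only points to watch are the index bookkeeping when matching the two presentations of $\mathcal{M}^{l}_{\mu,\nu}$, the parameter dictionary $M\leftrightarrow d^{2}$ and $x\leftrightarrow a$, and --- exactly as in Theorem~\ref{th:1} --- the passage from product pure states to general separable states via convexity of $\|\cdot\|_{\mathrm{tr}}$. Everything else is the routine algebra displayed above.
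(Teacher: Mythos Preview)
Your proposal is correct and matches the paper's approach: the paper presents Corollary~\ref{co:1} without a separate proof, simply as the $N=1$, $M=d^{2}$, $x=a$ specialization of Theorem~\ref{th:1}, and your substitution and algebraic simplification carry this out explicitly. Your added remark about extending from pure product states to general separable states via convexity of the trace norm is a detail the paper leaves implicit even in the proof of Theorem~\ref{th:1}, so including it is appropriate.
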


\begin{corollary}\label{co:2}
Let $\{P^{(b)}\}_{b=1}^{d_A +1}$ with parameter $\kappa_A$ and $\{Q^{(b')}\}_{b'=1}^{d_B +1}$ with parameter $\kappa_B$ being MUMs on subsystems $H_A$ and $H_B$, respectively, where $P^{(b)} = \{P_n^{(b)}\}_{n=1}^{d_A}$, $Q^{(b')} = \{Q_{n'}^{(b')}\}_{n'=1}^{d_B}$, and $b,b' = 1,2,\ldots,d + 1$. Denote $\mathcal{J}(\rho_{AB})$ the matrix with entries given by the probabilities $\mathrm{tr}[(P_n^{(b)} \otimes Q_{n'}^{(b')})\rho_{AB}]$, where $n, b$ and $n', b'$ are the row and the column indices, respectively. For any separable state $\rho_{AB}$ in $H_{A} \otimes H_B$, we have
\begin{footnotesize}
\begin{eqnarray}\label{eq:8}
\left\|\mathcal{M}_{\mu, \nu}^l \left(\rho_{AB}\right)\right\|_{\mathrm{tr}} \leq \sqrt{\left(l \mu^{2} + 1 + \kappa_A\right)\left(l \nu^{2}+ 1 + \kappa_B\right)},
\end{eqnarray}
\end{footnotesize}
where \begin{align}\label{eq:9}
\mathcal{M}_{\mu,\nu}^l(\rho_{AB})=\begin{pmatrix}
	\mu\nu J_{l\times l}& \mu \omega_{l}(\eta)^T\\[1mm]
	\nu \omega_{l}(\theta)& \mathcal{J}(\rho_{AB})
\end{pmatrix},
\end{align}
\begin{eqnarray}\label{Eq:27}
\theta = \left( \begin{array}{c}
\mathrm{tr}(P_1^{(1)} \rho_A) \\
\mathrm{tr}(P_2^{(1)} \rho_A) \\
\vdots \\
\mathrm{tr}(P_{n}^{(1)} \rho_A)\\
\mathrm{tr}(P_1^{(2)} \rho_A) \\
\mathrm{tr}(P_2^{(2)} \rho_A) \\
\vdots \\
\mathrm{tr}(P_{n}^{(2)} \rho_A)\\
\mathrm{tr}(P_{n}^{(3)} \rho_A)\\
\vdots \\
\mathrm{tr}(P_{n}^{(b)} \rho_A)
\end{array} \right), \qquad
\eta = \left( \begin{array}{c}
\mathrm{tr}(Q_1^{(1)} \rho_A) \\
\mathrm{tr}(Q_2^{(1)} \rho_A) \\
\vdots \\
\mathrm{tr}(Q_{n'}^{(1)} \rho_B)\\
\mathrm{tr}(Q_1^{(2)} \rho_B) \\
\mathrm{tr}(Q_2^{(2)} \rho_B) \\
\vdots \\
\mathrm{tr}(Q_{n'}^{(2)} \rho_B)\\
\mathrm{tr}(Q_{n'}^{(3)} \rho_B)\\
\vdots \\
\mathrm{tr}(Q_{n'}^{(b')} \rho_B)
\end{array} \right).
\end{eqnarray}
\end{corollary}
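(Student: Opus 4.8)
The plan is to obtain Corollary \ref{co:2} as a direct specialization of Theorem \ref{th:1}, exploiting the fact that a MUM in dimension $d$ is exactly an $(N,M)$ POVM with $N=d+1$ and $M=d$. First I would set up the dictionary between the two frameworks: the operators $\{P_n^{(b)}\}$ and $\{Q_{n'}^{(b')}\}$ play the roles of the local $(N,M)$ POVM operators $E_{\alpha,k}$ and $E_{\beta,j}$; the probability matrix $\mathcal{J}(\rho_{AB})$ coincides with $\mathcal{P}(\rho_{AB})$; the vectors $\theta$ and $\eta$ coincide with $\tau$ and $\sigma$; and the numerical parameters match via $M_A=d_A$, $x_A=\kappa_A$ (and likewise for $B$). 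The one point worth checking is consistency of normalizations: the MUM relations (\ref{MUM}) fix $\mathrm{tr}(P_n^{(b)})=1$, which agrees with $\mathrm{tr}(E_{\alpha,k})=d/M$ precisely at $M=d$, while $\mathrm{tr}((P_n^{(b)})^{2})=\kappa$ then plays the role of $\mathrm{tr}(E_{\alpha,k}^{2})=x$.

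With this identification, I would substitute $M_A=d_A$ and $x_A=\kappa_A$ into the subsystem-$A$ factor on the right-hand side of (\ref{eq:5}):
\[
\frac{(d_A-1)(d_A^{2}+M_A^{2}x_A)}{d_A M_A (M_A-1)}=\frac{(d_A-1)\,d_A^{2}(1+\kappa_A)}{d_A^{2}(d_A-1)}=1+\kappa_A ,
\]
and similarly obtain $1+\kappa_B$ for subsystem $B$. Inserting these two simplified factors into (\ref{eq:5}) gives exactly the bound $\sqrt{(l\mu^{2}+1+\kappa_A)(l\nu^{2}+1+\kappa_B)}$ of (\ref{eq:8}), which proves the corollary.

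For completeness I would also spell out the self-contained route that mirrors the proof of Theorem \ref{th:1}. For a product state $\rho_{AB}=\rho_A\otimes\rho_B$ one has $\mathcal{J}(\rho_{AB})=\theta\eta^{T}$, so $\mathcal{M}_{\mu,\nu}^l(\rho_{AB})$ factors as the outer product of the column obtained by stacking $\mu J_{l\times 1}$ above $\theta$ with the row obtained by juxtaposing $\nu J_{1\times l}$ and $\eta^{T}$; submultiplicativity of the trace norm ($\|AB\|_{\mathrm{tr}}\le\|A\|_{\mathrm{tr}}\|B\|_{\mathrm{tr}}$) then reduces the estimate to bounding the trace norm of each of these vectors, whose squares equal $l\mu^{2}+\sum_{b,n}[\mathrm{tr}(P_n^{(b)}\rho_A)]^{2}$ and $l\nu^{2}+\sum_{b',n'}[\mathrm{tr}(Q_{n'}^{(b')}\rho_B)]^{2}$. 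The required input is the MUM analogue of (\ref{re:1}), namely $\sum_{b=1}^{d+1}\sum_{n=1}^{d}[\mathrm{tr}(P_n^{(b)}\varrho)]^{2}\le 1+\kappa$ for every state $\varrho$, which follows by expanding $\varrho$ in the orthonormal basis $\{I_d/\sqrt{d},F_{n,b}\}$, invoking (\ref{eq2}) and (\ref{eq1}) together with the relation (\ref{Eq:14}) between $t$ and $\kappa$, and using $\mathrm{tr}(\varrho^{2})\le 1$; equivalently it is precisely (\ref{re:1}) evaluated at $M=d$, $x=\kappa$. Finally, for a general separable state $\rho_{AB}=\sum_i p_i\,\rho_A^i\otimes\rho_B^i$ one notes the identity $\mathcal{M}_{\mu,\nu}^l(\rho_{AB})=\sum_i p_i\,\mathcal{M}_{\mu,\nu}^l(\rho_A^i\otimes\rho_B^i)$ (the constant block $\mu\nu J_{l\times l}$ surviving because $\sum_i p_i=1$) and closes the argument with the triangle inequality for $\|\cdot\|_{\mathrm{tr}}$.

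I do not anticipate a genuine obstacle, since all the real content already sits in Theorem \ref{th:1} and in inequality (\ref{re:1}). The two places that deserve care are: (i) verifying the parameter dictionary $M\leftrightarrow d$, $x\leftrightarrow\kappa$ and the accompanying normalization check, which is what makes the substitution legitimate; and (ii) if one writes out the direct proof rather than quoting Theorem \ref{th:1}, making explicit the convex-combination step that lifts the product-state estimate to arbitrary separable states — a step that is left implicit in the proof of Theorem \ref{th:1}.
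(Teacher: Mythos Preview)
Your proposal is correct and matches the paper's approach: the paper presents Corollary~\ref{co:2} simply as the $N=d+1$, $M=d$ specialization of Theorem~\ref{th:1} without a separate proof, and your substitution $M_A=d_A$, $x_A=\kappa_A$ (and likewise for $B$) reducing the right-hand side of (\ref{eq:5}) to $\sqrt{(l\mu^{2}+1+\kappa_A)(l\nu^{2}+1+\kappa_B)}$ is exactly the intended derivation. Your additional self-contained route and the explicit convex-combination step are not in the paper but are welcome clarifications.
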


Next, we illustrate  our criteria by several examples.

\begin{example}\label{Ex:1}
We consider the following state by mixing $\rho_{AB}$ with white noise:
\begin{equation}\label{Eq:28}
\rho_{p}=\dfrac{1-p}{9}I_{9}+p\rho_{AB},
\end{equation}
where
\begin{equation}\label{Eq:29}
\rho_{AB} = \dfrac{1}{4}\left(I_9-\sum\limits_{i=0}^{4}\ket{\psi_{i}}\bra{\psi_{i}}\right)
\end{equation}
is a $3\times 3$ PPT entangled state with
    \begin{align*}
    &\ket{\psi_{0}}=\dfrac{\ket{0}(\ket{0}-\ket{1})}{\sqrt{2}}, ~~ \ket{\psi_{1}}=\dfrac{(\ket{0}-\ket{1})\ket{2}}{\sqrt{2}}, ~~ \ket{\psi_{2}}=\dfrac{\ket{2}(\ket{1}-\ket{2})}{\sqrt{2}},\\  &\ket{\psi_{3}}=\dfrac{(\ket{1}-\ket{2})\ket{0}}{\sqrt{2}},~~ \ket{\psi_{4}}=\dfrac{(\ket{0}+\ket{1}+\ket{3})(\ket{0}+\ket{1}+\ket{3})}{3}.
    \end{align*}

Let the $(N,M)$ POVMs in Theorem \ref{th:1} be an $(8,2)$ POVMs, in which the Hermitian basis operator $G_{\alpha, k}$ is defined by the Gell-Mann matrices(see \cite{gell1962symmetries} for details):
\begin{align*}
  &G_{11}=\dfrac{1}{\sqrt{2}}\begin{pmatrix}
    	0 & 1 & 0\\
    	1 & 0 & 0\\
    	0 & 0 & 0
    \end{pmatrix},~
    G_{21}=\dfrac{1}{\sqrt{2}}\begin{pmatrix}
    	0 & -\mathrm{i} & 0\\
    	\mathrm{i} & 0 & 0\\
    	0 & 0 & 0
    \end{pmatrix},~
    G_{31}=\dfrac{1}{\sqrt{2}}\begin{pmatrix}
    	0 & 0 & 1\\
    	0 & 0 & 0\\
    	1 & 0 & 0
    \end{pmatrix},~
    G_{41}=\dfrac{1}{\sqrt{2}}\begin{pmatrix}
    	0 & 0 & -\mathrm{i}\\
    	0 & 0 & 0\\
    	\mathrm{i} & 0 & 0
    \end{pmatrix}, \\[1mm]
  & G_{51}=\dfrac{1}{\sqrt{2}}\begin{pmatrix}
    	0 & 0 & 0\\
    	0 & 0 & 1\\
    	0 & 1 & 0
    \end{pmatrix},~
    G_{61}=\dfrac{1}{\sqrt{2}}\begin{pmatrix}
    	0 & 0 & 0\\
    	0 & 0 & -\mathrm{i}\\
    	0 & \mathrm{i} & 0
    \end{pmatrix},~
  G_{71}=\dfrac{1}{\sqrt{2}}\begin{pmatrix}
    	1 & 0 & 0\\
    	0 & -1 & 0\\
    	0 & 0 & 0
    \end{pmatrix},~
    G_{81}=\dfrac{1}{\sqrt{6}}\begin{pmatrix}
    	1 & 0 & 0\\
    	0 & 1 & 0\\
    	0 & 0 & -2
    \end{pmatrix}.
\end{align*}
Here, the parameter is $x=\frac{3}{4}+t^{2}(\sqrt{2}+1)^{2}$ with $t\in[-0.2536,0.2536]$.
In particular, for $\mu = 0.1$, $\upsilon =0.05$, $l = 2$ and $t = 0.01$, using Theorem \ref{th:1} we obtain
\begin{align}\label{Eq:30}
f_1(p) \nonumber &\equiv \left\|\mathcal{M}_{\mu, \nu}^l \left(\rho_{p}\right)\right\|_{\mathrm{tr}} - \sqrt{\left(l \mu^{2}+\frac{(d_{A}-1)(d_{A}^2+M_{A}^2x_A)}{d_{A}M_{A}(M_{A}-1)}\right)\left(l \nu^{2}+\frac{(d_{B}-1)(d_{B}^2+M_{B}^2x_B)}{d_{B}M_{B}(M_{B}-1)}\right)}\\  &\approx 0.001174p-0.0007841,
\end{align}
which is  positive(i.e., $\rho_p$ is entangled) when $0.670 093 \leq p \leq 1$.
 From Theorem $1$ in Ref.\cite{tang2023enhancing}, we know that
\begin{align}\label{Eq:31}
g_1(p) \nonumber&\equiv \|\mathcal{P}(\rho_p)\|_{\mathrm{tr}}-  \sqrt{\left(\frac{(d_{A}-1)(d_{A}^2+M_{A}^2x_A)}{d_{A}M_{A}(M_{A}-1)}\right)
\left(\frac{(d_{B}-1)(d_{B}^2+M_{B}^2x_B)}{d_{B}M_{B}(M_{B}-1)}\right)} \\& \approx 0.0008892p - 0.0007889
\end{align}
is positive; then $\rho_p$ is entangled, which implies $0.882 182 \leq p \leq 1$. Clearly, our result outperforms that of Ref.\cite{tang2023enhancing}; see Fig.\ref{fig:6}.

When $N = 1$ and $M = 9$,  Theorem \ref{th:1} reduces to Corollary \ref{co:1}. The Hermitian basis operator $G_{\alpha, k}$ is given  by the following Gell-Mann matrices(see \cite{gour2014construction} for details):
\begin{align*}
&G_{11}=\left(
\begin{array}{ccc}
 \frac{1}{\sqrt{2}} & 0 & 0 \\
 0 & -\frac{1}{\sqrt{2}} & 0 \\
 0 & 0 & 0 \\
\end{array}
\right),~~
G_{12}=\left(
\begin{array}{ccc}
 0 & \frac{1}{\sqrt{2}} & 0 \\
 \frac{1}{\sqrt{2}} & 0 & 0 \\
 0 & 0 & 0 \\
\end{array}
\right),~~
G_{13}=\left(
\begin{array}{ccc}
 0 & 0 & \frac{1}{\sqrt{2}} \\
 0 & 0 & 0 \\
 \frac{1}{\sqrt{2}} & 0 & 0 \\
\end{array}
\right),
G_{14}=\left(
\begin{array}{ccc}
 0 & -\frac{i}{\sqrt{2}} & 0 \\
 \frac{i}{\sqrt{2}} & 0 & 0 \\
 0 & 0 & 0 \\
\end{array}
\right),~~\\
&G_{15}=\left(
\begin{array}{ccc}
 \frac{1}{\sqrt{6}} & 0 & 0 \\
 0 & \frac{1}{\sqrt{6}} & 0 \\
 0 & 0 & -\sqrt{\frac{2}{3}} \\
\end{array}
\right),~~
G_{16}=\left(
\begin{array}{ccc}
 0 & 0 & 0 \\
 0 & 0 & \frac{1}{\sqrt{2}} \\
 0 & \frac{1}{\sqrt{2}} & 0 \\
\end{array}
\right),
G_{17}=\left(
\begin{array}{ccc}
 0 & 0 & -\frac{i}{\sqrt{2}} \\
 0 & 0 & 0 \\
 \frac{i}{\sqrt{2}} & 0 & 0 \\
\end{array}
\right),~~
G_{18}=\left(
\begin{array}{ccc}
 0 & 0 & 0 \\
 0 & 0 & -\frac{i}{\sqrt{2}} \\
 0 & \frac{i}{\sqrt{2}} & 0 \\
\end{array}
\right).~~
\end{align*}
Here, the parameter is $a=\frac{1}{27}+128t^{2}$ with $t\in[-0. 012, 0. 012]$.
 Theorem $1$ in Ref.\cite{shen2018improved} detects the entanglement of $\rho_p$ within the interval $0.8822 \leq p \leq 1$. Figure.\ref{fig:mesh} shows the variation of the entanglement detection by our Theorem $1$ with respect to parameters $\mu$ and $\nu$, with fixed $p = 0.8822$ and $l = 1$.  The blue region is the zero plane, and the orange region corresponds to the function of $f(\mu, \nu) = \left\|\mathcal{M}_{\mu, \nu}^l \left(\rho_p\right)\right\|_{\mathrm{tr}} -\sqrt{\left(l \mu^{2}+\frac{a_{A}d_{A}^{2}+1}{d_{A}(d_{A}+1)}\right)\left(l \nu^{2}+\frac{a_{B}d_{B}^{2}+1}{d_{B}(d_{B}+1)}\right)}$. It is observed that there are more parameter combinations of $\mu$ and $\nu$ that can be used to detect the entanglement of $\rho_p$ within the range $0.8822 \leq p \leq 1$. The above-mentioned numerical computations show that the existing criteria \cite{spengler2012entanglement,chen2014entanglement,shen2015entanglement} based on MUMs cannot detect any entanglement of $\rho_p$.
\begin{figure}
\includegraphics[width=0.80\textwidth]{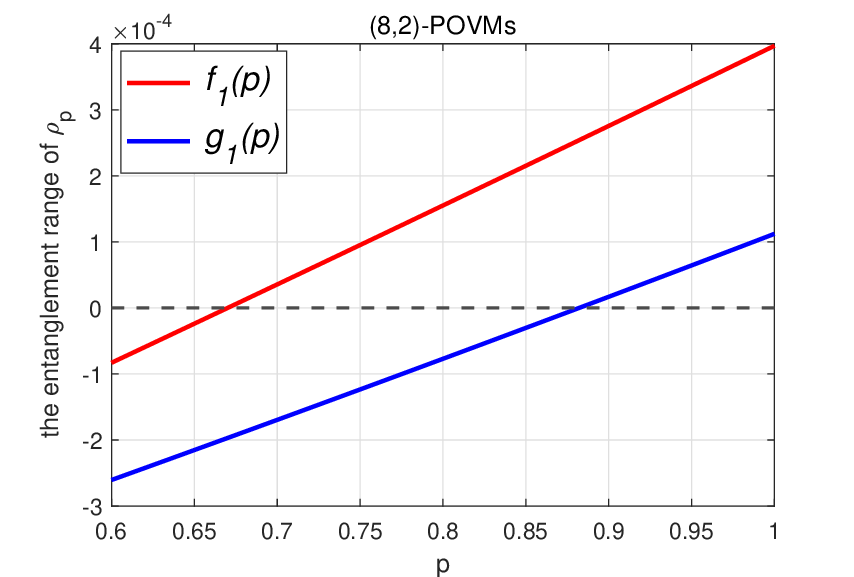}
        \label{fig:6}
\caption{ ~\(f_1(p)\)~ derived from Theorem \ref{th:1} (solid red line),from which~ \(\rho_p\)~ is entangled for~  \(0.670 093 \leq p \leq 1\).   \(g_1(p)\)~ presented in Theorem $1$ of Ref.\cite{tang2023enhancing} (dashed blue line), from which~ \(\rho_p\)~ is entangled for~ \(0.882 182 \leq p \leq 1\).}
\end{figure}

\begin{figure}
 \includegraphics[width=0.80\textwidth]{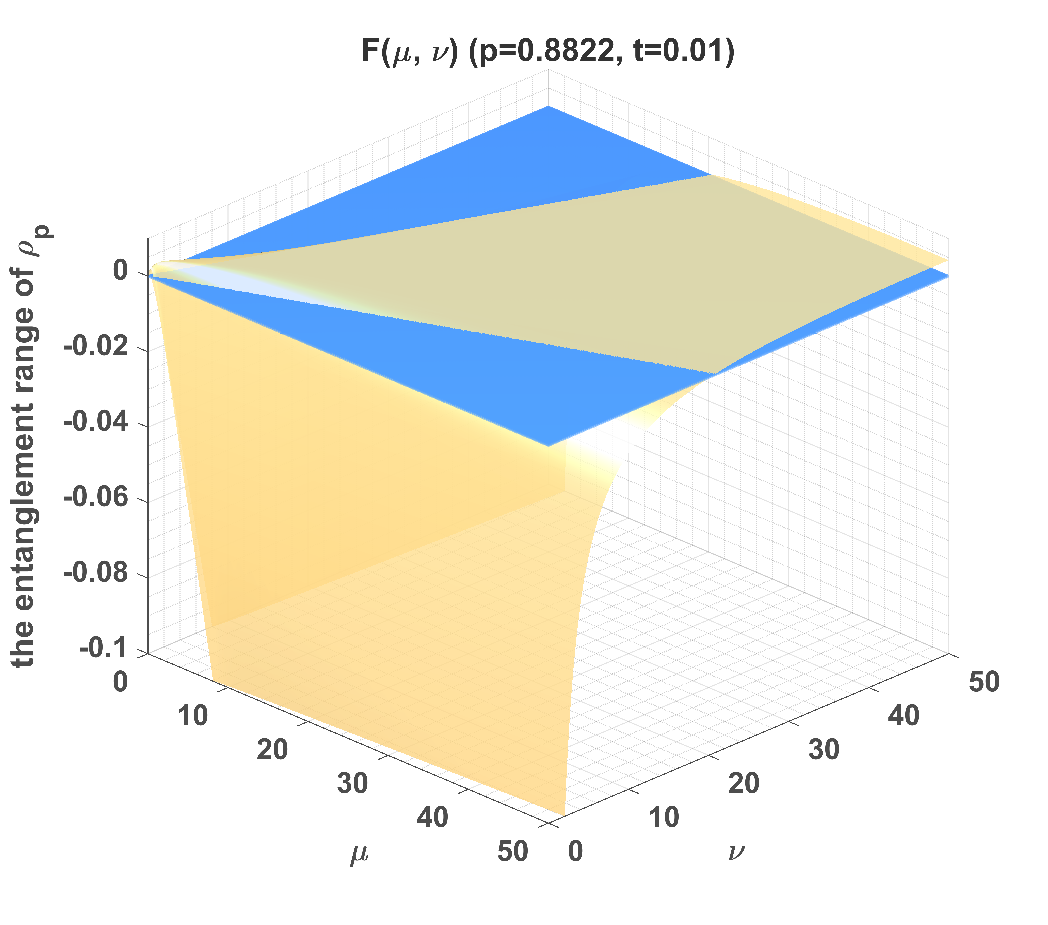}
    \caption{The orange region above the blue zero plane indicates that entanglement of ~$\rho_p$~ for ~$0.8822 \leq p \leq 1$~ is detected by  Theorem $1$ in Ref.\cite{shen2018improved}.}
    \label{fig:mesh}
\end{figure}

In particular, for $\mu = 0.1$, $\upsilon =0.05$, $l = 2$ and $t = 0.01$, from Corollary \ref{co:1} we obtain
\begin{align}\label{Eq:32}
f_2(p)\nonumber&\equiv\left\|\mathcal{M}_{\mu, \nu}^l \left(\rho_p\right)\right\|_{\mathrm{tr}} -\sqrt{\left(l \mu^{2}+\frac{a_{A}d_{A}^{2}+1}{d_{A}(d_{A}+1)}\right)\left(l \nu^{2}+\frac{a_{B}d_{B}^{2}+1}{d_{B}(d_{B}+1)}\right)} \\&\approx 0.011589p -0.0097194,
\end{align}
which implies that $\rho_p$ is entangled for $0.837 993 \leq p \leq 1$.
From Remark $2$ in Ref.\cite{tang2023enhancing},  we know that
\begin{align}\label{Eq:33}
g_2(p) \nonumber&\equiv \|\mathcal{G}(\rho_p)\|_{\mathrm{tr}} -\sqrt{\frac{a_{A}d_{A}^{2}+1}{d_{A}(d_{A}+1)}\frac{a_{B}d_{B}^{2}+1}{d_{B}(d_{B}+1)}}\\& \approx 0.010979p -0.0096898 > 0,
\end{align}
then $\rho_p$ is entangled, which gives rise to  $0.882 577 \leq p \leq 1$.
A direct comparison with the results of Ref.\cite{tang2023enhancing} in Fig.\ref{fig:7} demonstrates that our results better than that of Ref.\cite{tang2023enhancing}.
\begin{figure}
   \includegraphics[width=0.80\textwidth]{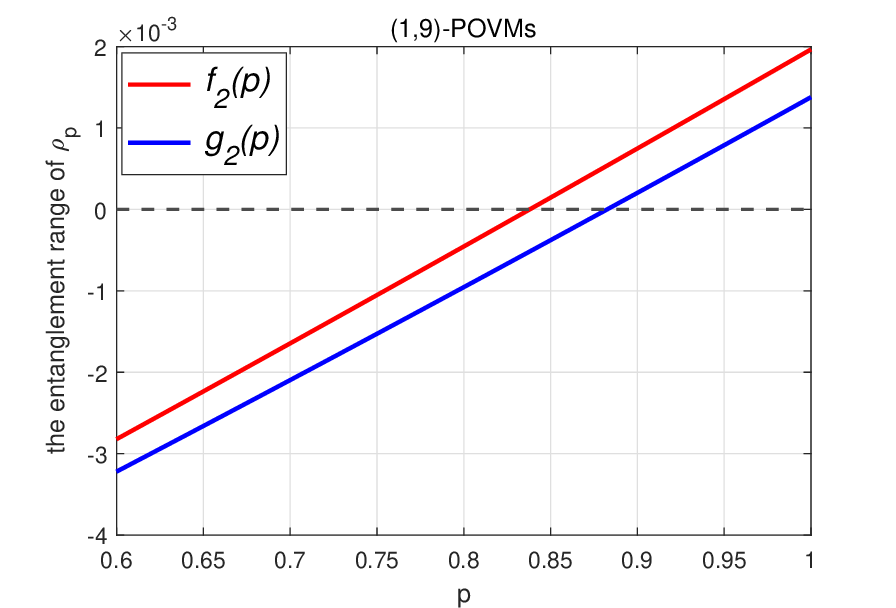}
    \caption{~$f_2(p)$~ derived from Corollary \ref{co:1} (solid red line), from which~ \(\rho_p\)~ is entangled for~$0.837 933 \leq p \leq 1$. ~$g_2(p)$~  presented in Remark $2$ of Ref.\cite{tang2023enhancing} (dashed blue line), from which~ \(\rho_p\)~ is entangled for~ $0.882 577 \leq p \leq 1$.}
    \label{fig:7}

\end{figure}

\begin{figure}
\includegraphics[width=0.80\textwidth]{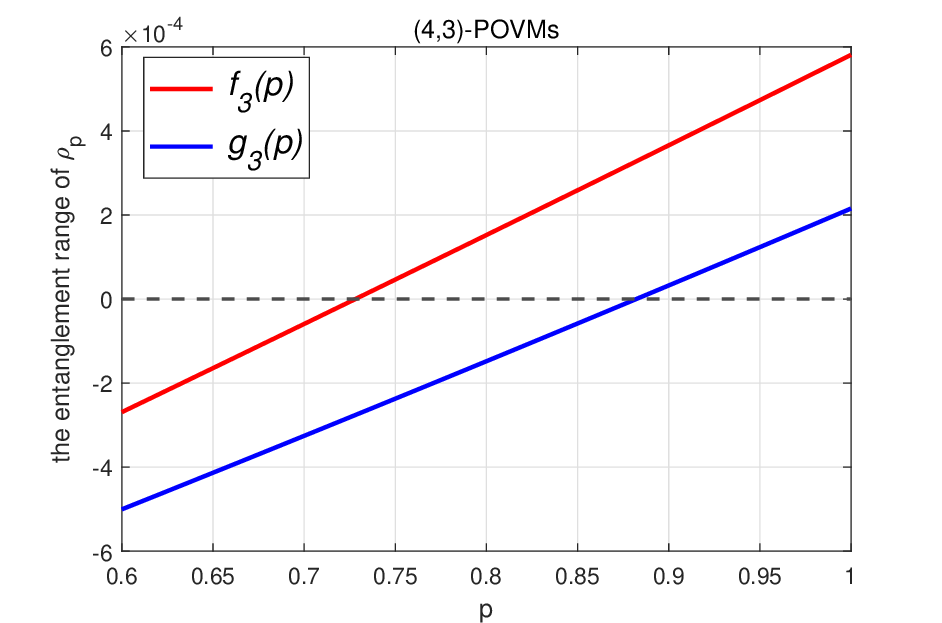}
    \caption{~\(f_3(p)\)~ derived from Corollary \ref{co:2} (solid red line), from which~ \(\rho_p\)~ is entangled for~ \(0.728 219 \leq p \leq 1\).  ~\(g_3(p)\)~ presented in Remark $3$ of Ref.\cite{tang2023enhancing} (dashed blue line), from which~ \(\rho_p\)~ is entangled for~ \(0.882 178 \leq p \leq 1\).}
    \label{fig:8}
\end{figure}

When \(N = 4\) and \(M = 3\),  Theorem \ref{th:1} reduces to Corollary \ref{co:2}, in which the Hermitian basis operator $G_{\alpha, k}$ is given by the following Gell-Mann matrices(see \cite{kalev2014mutually} for a detail):
\begin{align*}
G_{11} &= \frac{1}{\sqrt{2}}\left(
\begin{array}{ccc}
0 & 1 & 0 \\
1 & 0 & 0 \\
0 & 0 & 0
\end{array}
\right),~~
G_{12} = \frac{1}{\sqrt{2}}
\left(
\begin{array}{ccc}
0 & -i & 0 \\
i & 0 & 0 \\
0 & 0 & 0
\end{array}
\right),~~
G_{21} = \frac{1}{\sqrt{2}}
\left(
\begin{array}{ccc}
0 & 0 & 1 \\
0 & 0 & 0 \\
1 & 0 & 0
\end{array}
\right),~~
G_{22} = \frac{1}{\sqrt{2}}
\left(
\begin{array}{ccc}
0 & 0 & -i \\
0 & 0 & 0 \\
i & 0 & 0
\end{array}
\right),\\
G_{31} &= \frac{1}{\sqrt{2}}
\left(
\begin{array}{ccc}
0 & 0 & 0 \\
0 & 0 & 1 \\
0 & 1 & 0
\end{array}
\right),~~
G_{32} = \frac{1}{\sqrt{2}}
\left(
\begin{array}{ccc}
0 & 0 & 0 \\
0 & 0 & -i \\
0 & i & 0
\end{array}
\right),~~
G_{41} = \frac{1}{\sqrt{2}}
\left(
\begin{array}{ccc}
1 & 0 & 0 \\
0 & -1 & 0 \\
0 & 0 & 0
\end{array}
\right),~~
G_{42} = \frac{1}{\sqrt{6}}
\left(
\begin{array}{ccc}
1 & 0 & 0 \\
0 & 1 & 0 \\
0 & 0 & -2
\end{array}
\right).
\end{align*}
Here, the parameter is $\kappa=\frac{1}{3}+2t^{2}(1+\sqrt{3})^2$ with $t\in[-0.0547,0.3454]$.
In particular, for $\mu = 0.1$, $\upsilon =0.05$, $l = 2$ and $t = 0.01$, Corollary \ref{co:2} gives rise to
\begin{align}\label{Eq:34}
f_3(p) \nonumber&\equiv \left\|\mathcal{M}_{\mu, \nu}^l \left(\rho_p\right)\right\|_{\mathrm{tr}} - \sqrt{\left(l \mu^{2} + 1 + \kappa_A\right)\left(l \nu^{2}+ 1 + \kappa_B\right)} \\&\approx 0.002072p -0.0015086,
\end{align}
namely, $\rho_p$ is entangled for $0.728 219 \leq p \leq 1$. From Remark $3$ in Ref.\cite{tang2023enhancing}, we know that
\begin{align}\label{Eq:35}
g_3(p) \nonumber&\equiv \|\mathcal{J}(\rho_p)\|_{\mathrm{tr}} -\sqrt{(1 + \kappa_A)(1 + \kappa_B)} \\&\approx 0.001718p -0.00151559 >0,
\end{align}
then $\rho_p$ is entangled; i.e., $0.882 178 \leq p \leq 1$. Clearly, our result outperforms that of Ref.\cite{tang2023enhancing}; see Fig.\ref{fig:8}.

\end{example}

\begin{example}\label{Ex:4}
We consider one distinct class of PPT states $\rho_1(\lambda)$ given by \cite{bandyopadhyay2005non,halder2019construction}:
\begin{align}
\rho_1(\lambda) &= \lambda |\omega_1\rangle \langle \omega_1| + (1 - \lambda)\rho_{BE},
\end{align}
where
\begin{align}
\rho_{BE} &= \frac{1}{4}\left( I_{9} - \sum_{i=1}^{5}|\omega_i\rangle\langle\omega_i|\right)
\end{align}
is a bound entangled state with
\begin{align}
\nonumber|\omega_1\rangle& = |2\rangle \otimes \frac{1}{\sqrt{2}}(|1\rangle - |2\rangle),\qquad
\nonumber|\omega_2\rangle = |0\rangle \otimes \frac{1}{\sqrt{2}}(|0\rangle - |1\rangle),\\
|\omega_3\rangle &= \frac{1}{\sqrt{2}}(|0\rangle - |1\rangle) \otimes |2\rangle,\qquad
\nonumber|\omega_4\rangle = \frac{1}{\sqrt{2}}(|1\rangle - |2\rangle) \otimes |0\rangle,\\
\left| \omega_5 \right\rangle &= \frac{1}{\sqrt{3}} (|0\rangle + |1\rangle + |2\rangle) \otimes \frac{1}{\sqrt{3}} (|0\rangle + |1\rangle + |2\rangle).
\end{align}
\end{example}
Clearly, $\rho_1(\lambda)$ is of rank $5$. Set $\mu = \nu = 0.005$, $t = 0.01$ and $l = 1$. In Table \ref{tab:3} we compare our results with the existing ones. It can be seen that our criterion detects entangled states better than those in  Refs.\cite{chen2014entanglement,chen2015general,{tang2023improved},
{shen2018improved},{lai2018entanglement},{Siudzinska2022}}.

\begin{center}
	\begin{table*}[htp!]
\caption{Entanglement of the state $\rho_1(\lambda)$ in Example \ref{Ex:4}}
		\label{tab:3}
		\begin{tabular}{cccc}
			\hline\noalign{\smallskip}
		 \quad& (4,3) POVMs\quad&(1,9)  POVMs\quad & (8,2) POVMs\\
		\noalign{\smallskip}\hline\noalign{\smallskip}
		Tensor product based \quad& fail to detect \cite{chen2014entanglement} \quad &  fail to detect  \cite{chen2015general}  \quad&  fail to detect \cite{tang2023improved}  \\	
		\hspace{1mm}\\
Trace norm based \quad&$0 \leq \lambda \leq 0.069160$ \cite{shen2018improved} \quad& $0 \leq \lambda \leq 0.068848$ \cite{lai2018entanglement} \quad& $0 \leq \lambda \leq 0.069163$ \cite{Siudzinska2022} \\		
		\hspace{1mm}\\
Our method \quad& $0 \leq \lambda \leq 0.070740$ \quad& $0 \leq \lambda \leq 0.069089$ \quad& $0 \leq \lambda \leq 0.072155$ \\
		\noalign{\smallskip}\hline
		\end{tabular}
	\end{table*}
\end{center}

\begin{example}\label{Ex:2}
Consider the isotropic states
\begin{eqnarray}\label{Eq:36}
\rho_{\mathrm{iso}}=q\mid\Phi^{+}\rangle\langle\Phi^{+}|+(1-q)\frac{\mathbb{I}}{d^2},\ \ \  0\leq q\leq1,
\end{eqnarray}
where $\mid\Phi^{+}\rangle=\displaystyle\frac{1}{\sqrt{d}}\sum\limits_{i=0}\limits^{d-1}\mid i\rangle\mid i\rangle$. We set $u = v = 2$, $l = 10$ and $t = 0.01$ within Theorem \ref{th:1}, Corollary \ref{co:1} and Corollary \ref{co:2}. Then, we have the following conclusions for $d=3$:

{\it (8,2) POVMs}~~From (\ref{eq:5}) we obtain

\begin{align}\label{Eq:37}
f_4(q)\nonumber&\equiv\left\|\mathcal{M}_{\mu, \nu}^l \left(\rho_{\mathrm{iso}}\right)\right\|_{\mathrm{tr}} - \sqrt{\left(l \mu^{2}+\frac{(d_{A}-1)(d_{A}^2+M_{A}^2x_A)}{d_{A}M_{A}(M_{A}-1)}\right)\left(l \nu^{2}+\frac{(d_{B}-1)(d_{B}^2+M_{B}^2x_B)}{d_{B}M_{B}(M_{B}-1)}\right)} \\&  \approx 0.0032q-0.0008.
\end{align}

{\it(1,9) POVMs} The $(N, M)$-POVMs reduce to GSICPOVMs. By using (\ref{eq:6}), we obtain

\begin{eqnarray}\label{Eq:38}
f_5(q)\equiv\left\|\mathcal{M}_{\mu, \nu}^l \left(\rho_{\mathrm{iso}}\right)\right\|_{\mathrm{tr}} -\sqrt{\left(l \mu^{2}+\frac{a_{A}d_{A}^{2}+1}{d_{A}(d_{A}+1)}\right)\left(l \nu^{2}+\frac{a_{B}d_{B}^{2}+1}{d_{B}(d_{B}+1)}\right)} \approx 0.0384q-0.0096.
\end{eqnarray}

{\it(4,3) POVMs} The $(N, M)$-POVMs reduce to GSICPOVMs.  From (\ref{eq:8}) we have

\begin{eqnarray}\label{Eq:39}
f_6(q) \equiv \left\|\mathcal{M}_{\mu, \nu}^l \left(\rho_{\mathrm{iso}}\right)\right\|_{\mathrm{tr}} - \sqrt{\left(l \mu^{2} + 1 + \kappa_A\right)\left(l \nu^{2}+ 1 + \kappa_B\right)} \approx 0.0060q-0.0015.
\end{eqnarray}

The above calculations show that $\rho_{\mathrm{iso}}$ is entangled for $\frac{1}{4} < q \leq 1$. Therefore, our criterion detects all the entanglement of the isotropic states due to the fact that $\rho_{\mathrm{iso}}$ is entangled only if $q > \frac{1}{d+1}$ \cite{bertlmann2005optimal}, as shown in Fig.\ref{fig:3}.
\begin{figure}[htp]
\includegraphics[width=0.80\textwidth]{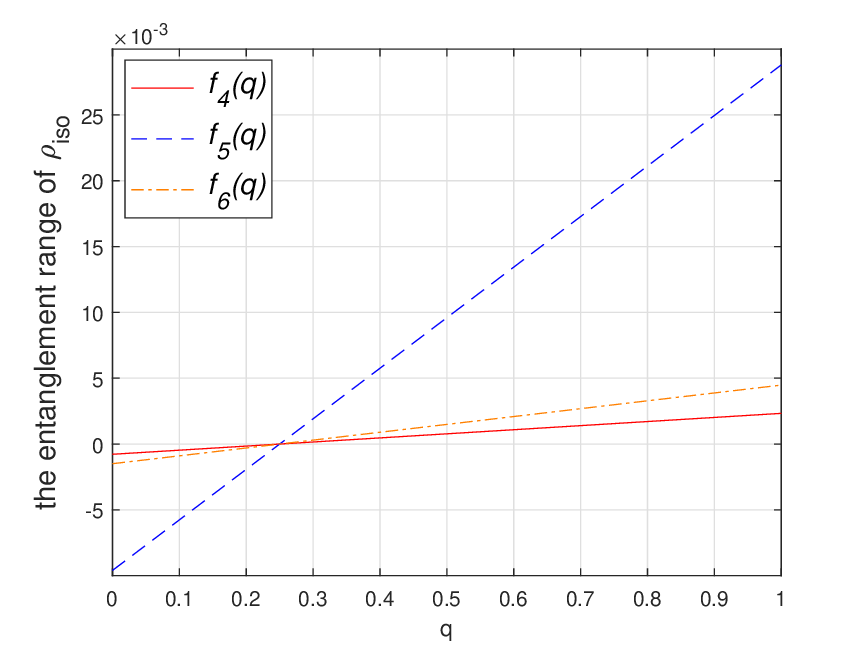}
\caption{~\(f_4(q)\)~ in Theorem \ref{th:1} (solid red line), ~\(f_5(q)\)~ in Corollary \ref{co:1} (dashed blue line), and ~\(f_6(q)\)~ in Corollary \ref{co:2} (dash-dotted orange line). It can be seen  that ~\(\rho_{\mathrm{iso}}\)~ is entangled for ~\(\frac{1}{4} < q \leq 1\).}
\label{fig:3}
\end{figure}

\end{example}

The matrix realignment has played an important role in the studies of separability  \cite{chen2002matrix,zhang2008entanglement,shen2015separability}. In Ref.\cite{shi2023family} Shi $et$ $al.$ defined the following matrix:

\begin{eqnarray}\label{Eq:40}
\mathcal{Q}_{\alpha,\beta}(\rho_{AB})=\begin{pmatrix}
	\alpha\beta& \alpha \mathrm{Vec}(\rho_B)^T\\
	\beta \mathrm{Vec}(\rho_A)& \mathrm{R}(\rho_{AB})
\end{pmatrix},
\end{eqnarray}
where $\alpha,\beta\in \mathbb{R}$, $\rho_A$ and $\rho_B$ are the reduced density matrices of systems $A$ and $B$, respectively; $\mathrm{R}(\rho_{AB})$ is the realigned matrix of $\rho_{AB}$; $\mathrm{Vec}(X)$ denotes the vectorization of the matrix $X$. It is shown that if $\rho_{AB}$ is separable, then
$\norm{\mathcal{Q}_{\alpha,\beta}(\rho_{AB})}_{\mathrm{tr}}\le \sqrt{(\alpha^2+1)(\beta^2+1)}$.
In Ref.\cite{sun2025separability} Sun $et$ $al.$ introduced the following matrix:

\begin{eqnarray}\label{Eq:41}
	\mathcal{Q}_{\alpha, \beta}^{l}(\rho_{AB})=\left(\begin{array}{cc}
		\alpha \beta J_{l \times l} & \alpha \omega_{l}\left(\mathrm{tr}_{A}(\rho)\right)^{T} \\
		\beta \omega_{l}\left(\mathrm{tr}_{B}(\rho)\right) & \mathrm{R}(\rho_{AB})
	\end{array}\right),
\end{eqnarray}
where $\alpha$ and $\beta$ are arbitrary real numbers, $l$ is a natural number, $J_{l \times l}$ is the matrix with all $l \times l$ elements being $1$, and $\mathrm{tr}_{A}$ denotes the partial trace over the subsystem $A$. It is shown that if $\rho_{AB} \in H_{A} \otimes H_{B}$ is separable, then $\left\|\mathcal{Q}_{\alpha, \beta}^{l}\left(\rho_{AB}\right)\right\|_{\mathrm{tr}} \leq \sqrt{\left(l \alpha^{2}+1\right)\left(l \beta^{2}+1\right)}$. We give an example to compare our results with the ones given in Refs.\cite{shi2023family,sun2025separability}.

\begin{example}\label{Ex:3}
Consider the mixture of the bound entangled state \cite{horodecki1997separability}
	\begin{equation}\label{Eq:42}
		\rho_{\upsilon}=\dfrac{1}{1+8\upsilon}
		\begin{pmatrix}
			\upsilon & 0 & 0 & 0 & \upsilon & 0 & 0 & 0 & \upsilon\\
			0 & \upsilon & 0 & 0 & 0 & 0 & 0 & 0 & 0\\
			0 & 0 & \upsilon & 0 & 0 & 0 & 0 & 0 & 0\\
			0 & 0 & 0 & \upsilon & 0 & 0 & 0 & 0 & 0\\
			\upsilon & 0 & 0 & 0 & \upsilon & 0 & 0 & 0 & \upsilon\\
			0 & 0 & 0 & 0 & 0 & \upsilon & 0 & 0 & 0\\
			0 & 0 & 0 & 0 & 0 & 0 & \frac{1+\upsilon}{2} & 0 & \frac{\sqrt{1-\upsilon^{2}}}{2}\\
			0 & 0 & 0 & 0 & 0 & 0 & 0 & \upsilon & 0\\
			\upsilon & 0 & 0 & 0 & \upsilon & 0 & \frac{\sqrt{1-\upsilon^{2}}}{2} & 0 & \frac{1+\upsilon}{2}
		\end{pmatrix}
	\end{equation}
and the identity,
    \begin{equation}\label{Eq:43}
    	\rho_y=y\rho_{\upsilon}+\frac{1-y}{9}I_{9}.
    \end{equation}
We set $N=8$ and $M=2$. Table \ref{tab:2} presents a comparison of the results derived from our Theorem \ref{th:1}
with $\mu =\nu=2$ and $l=10$, with  those of Theorem $1$ of Ref.\cite{shi2023family}  and Theorem $1$ of Ref.\cite{sun2025separability} for various values of $\upsilon$.
\begin{center}
	\begin{table*}[htp!]
\caption{Entanglement of the state $\rho_{y}$ in Example \ref{Ex:3} for different values of $\upsilon$}
		\label{tab:2}
		\begin{tabular}{cccc}
			\hline\noalign{\smallskip}
		$\upsilon$ \quad& Theorem $1$ in Ref.\cite{shi2023family}\quad&Theorem $1$ in Ref.\cite{sun2025separability} 
\quad & Our Theorem \ref{th:1} (t=0.01)\\
		\noalign{\smallskip}\hline\noalign{\smallskip}
		0.2 \quad& $0.9943 \leq y \leq 1$ \quad & $0.99408  \leq y \leq 1$ \quad& $0.994054  \leq y \leq 1$ \\
		
		0.4 \quad& $0.9948  \leq y \leq 1$ \quad& $0.99463  \leq y \leq 1$ \quad& $0.994609  \leq y \leq 1$ \\
		
		0.6 \quad& $0.9964  \leq y \leq 1$ \quad& $0.99627  \leq y \leq 1$ \quad& $0.99625 \leq y \leq 1$ \\
		
		0.8 \quad& $0.9982  \leq y \leq 1$ \quad& $0.998127  \leq y \leq 1$ \quad& $0.998122  \leq y \leq 1$ \\
		
		0.9 \quad& $0.9991  \leq y \leq 1$ \quad& $0.99906892  \leq y \leq 1$ \quad& $0.9990664  \leq y \leq 1$ \\
		\noalign{\smallskip}\hline
		\end{tabular}
	\end{table*}
\end{center}

 For $\mu = \nu = 2,~l = 10,~t=0.01$ and $\upsilon=0.2$, Theorem \ref{th:1} shows that if $f_7(y)\equiv \left\|\mathcal{M}_{\mu, \nu}^l \left(\rho_y\right)\right\|_{\mathrm{tr}} - \sqrt{\left(l \mu^{2}+\frac{(d_{A}-1)(d_{A}^2+M_{A}^2x_A)}{d_{A}M_{A}(M_{A}-1)}\right)\left(l \nu^{2}+\frac{(d_{B}-1)(d_{B}^2+M_{B}^2x_B)}{d_{B}M_{B}(M_{B}-1)}\right)} >0$, $\rho_y$ is entangled for $0.994054  \leq y \leq 1$.  Theorem $1$ in Ref.\cite{sun2025separability} says that if $h_1(y)\equiv\left\|\mathcal{Q}_{\alpha, \beta}^{l}\left(\rho_y\right)\right\|_{\mathrm{tr}} - \sqrt{\left(l \alpha^{2}+1\right)\left(l \beta^{2}+1\right)} >0$, $\rho_y$ is entangled, that is, $0.99408  \leq y \leq 1$.  Theorem $1$ in Ref.\cite{shi2023family} implies that $\rho_y$ is entangled if $h_2(y)\equiv\norm{\mathcal{Q}_{\alpha,\beta}(\rho_y)}_{\mathrm{tr}} - \sqrt{(\alpha^2+1)(\beta^2+1)} >0$, i.e., $0.9943 \leq y \leq 1$. It can be clearly seen from Table \ref{tab:2} and Fig.\ref{fig:2}  that compared with the methods in Refs.\cite{sun2025separability,shi2023family}, our method is capable of identifying a much wider range of entangled states.
\begin{figure}[htp]
\includegraphics[width=0.80\textwidth]{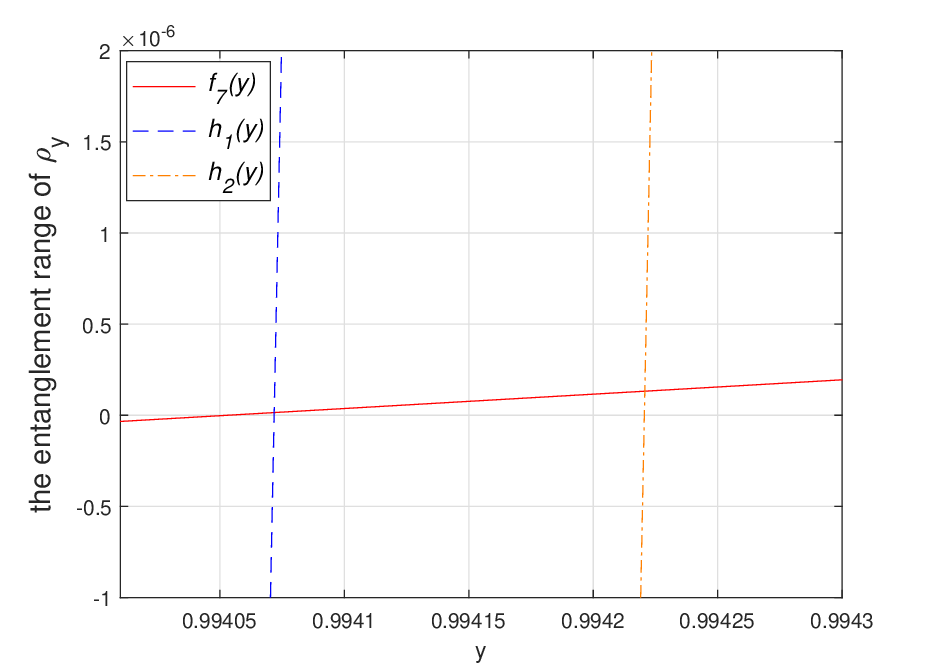}
\caption{ ~\(f_7(y)\)~ derived from Theorem \ref{th:1} (solid red line), from which ~\(\rho_y\)~ is entangled for ~\(0.994054 \leq y \leq 1\). \(h_1(y)\)~ presented in Theorem $1$ of Ref.\cite{sun2025separability} (dashed blue line),~from which~\(\rho_y\)~ is entangled for ~\(0.99408 \leq y \leq 1\).~\(h_2(y)\)~ given in Theorem $1$ of Ref.\cite{shi2023family} (dash-dotted orange line), from which ~\(\rho_y\)~ is entangled for~\(0.9943 \leq y \leq 1\).}
\label{fig:2}
\end{figure}
\end{example}

\section{DETECTION OF MULTIPARTITE ENTANGLEMENT}\label{sec:4}

We generalize  Theorem \ref{th:1} to tripartite and multipartite quantum systems. Let $\rho_{ABC}\in H_{A} \otimes H_{B} \otimes H_{C}$ be a tripartite state, and  $\{E_{\alpha,k}\}_{\alpha = 1}^{N_{A}}$, $\{E_{\beta,j}\}_{\beta = 1}^{N_{B}}$ and $\{E_{\gamma,n}\}_{\gamma = 1}^{N_{C}}$ the $(N,M)$ POVMs on $H_A$, $H_B$ and $H_C$ with efficiency parameter $x_{A}$, $x_{B}$ and $x_{C}$, respectively. We have the following theorem.

\begin{theorem}\label{th:2}
If \(\rho_{ABC} = \sum_i p_i \rho_i^A \otimes \rho_i^B \otimes \rho_i^C\) is fully separable, the following inequalities must be simultaneously satisfied:

(1) Inequality for the bipartition \(A|BC\):
\begin{eqnarray}\label{Eq:44}
\left\| \mathscr{M}_{\mu,\nu}^l(\rho_{A|BC}) \right\|_{\text{tr}} \leq \sqrt{\left( l\mu^2 + \frac{(d_A - 1)(d_A^2 + M_A^2 x_A)}{d_A M_A (M_A - 1)} \right) \left( l\nu^2 + \frac{(d_B - 1)(d_B^2 + M_B^2 x_B)}{d_B M_B (M_B - 1)} \cdot \frac{(d_C - 1)(d_C^2 + M_C^2 x_C)}{d_C M_C (M_C - 1)} \right)},
\end{eqnarray}

(2) Inequality for the bipartition \(B|AC\):
\begin{eqnarray}\label{Eq:45}
\left\| \mathscr{M}_{\mu,\nu}^l(\rho_{B|AC}) \right\|_{\text{tr}} \leq \sqrt{\left( l\mu^2 + \frac{(d_B - 1)(d_B^2 + M_B^2 x_B)}{d_B M_B (M_B - 1)} \right) \left( l\nu^2 + \frac{(d_A - 1)(d_A^2 + M_A^2 x_A)}{d_A M_A (M_A - 1)} \cdot \frac{(d_C - 1)(d_C^2 + M_C^2 x_C)}{d_C M_C (M_C - 1)} \right)},
\end{eqnarray}

(3) Inequality for the bipartition \(C|AB\):
\begin{eqnarray}\label{Eq:46}
\left\| \mathscr{M}_{\mu,\nu}^l(\rho_{C|AB}) \right\|_{\text{tr}} \leq \sqrt{\left( l\mu^2 + \frac{(d_C - 1)(d_C^2 + M_C^2 x_C)}{d_C M_C (M_C - 1)} \right) \left( l\nu^2 + \frac{(d_A - 1)(d_A^2 + M_A^2 x_A)}{d_A M_A (M_A - 1)} \cdot \frac{(d_B - 1)(d_B^2 + M_B^2 x_B)}{d_B M_B (M_B - 1)} \right)}.
\end{eqnarray}

Conversely, if \(\rho_{ABC}\) violates any one  of the above inequalities, \(\rho_{ABC}\) is an entangled state.

\end{theorem}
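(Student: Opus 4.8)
The plan is to reduce each of the three bipartition inequalities to Theorem~\ref{th:1} by treating the tripartite system as an effective bipartite one. Concretely, for the bipartition $A|BC$, I would regard $H_B\otimes H_C$ as a single subsystem and use the product POVM $\{E_{\beta,j}\otimes E_{\gamma,n}\}$ on it; the collection of operators $E_{\beta,j}\otimes E_{\gamma,n}$ is, for the purposes of the bound, just a list of positive operators whose relevant figure of merit is the sum of squared probabilities. So the first step is to write $\rho_{A|BC}$ as a bipartite state on $H_A\otimes(H_B\otimes H_C)$, form the probability matrix $\mathcal{P}(\rho_{A|BC})$ with row index $(\alpha,k)$ and column index $(\beta,j,\gamma,n)$, and build $\mathscr{M}_{\mu,\nu}^l(\rho_{A|BC})$ exactly as in \eqref{eq:3} with $\tau$ the vector of $\mathrm{tr}(E_{\alpha,k}\rho_A)$ and $\sigma$ the vector of $\mathrm{tr}((E_{\beta,j}\otimes E_{\gamma,n})\rho_{BC})$.

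Next I would run the proof of Theorem~\ref{th:1} verbatim in this setting. For a fully separable state $\rho_{ABC}=\sum_i p_i\rho_i^A\otimes\rho_i^B\otimes\rho_i^C$, the state viewed across $A|BC$ is separable as a bipartite state, so by convexity of the trace norm it suffices to treat a product pure state $\rho_A\otimes(\rho_B\otimes\rho_C)$, for which $\mathcal{P}(\rho_{A|BC})=\tau\sigma^T$ and $\mathscr{M}_{\mu,\nu}^l$ factors as a rank-one outer product of $(\mu J_{l\times 1},\tau)^T$ and $(\nu J_{1\times l},\sigma^T)$; the submultiplicativity of the trace norm then gives
\begin{equation*}
\left\|\mathscr{M}_{\mu,\nu}^l(\rho_{A|BC})\right\|_{\mathrm{tr}}\le\sqrt{\left(l\mu^2+\textstyle\sum_{\alpha,k}\mathrm{tr}(E_{\alpha,k}\rho_A)^2\right)\left(l\nu^2+\textstyle\sum_{\beta,j,\gamma,n}\mathrm{tr}((E_{\beta,j}\otimes E_{\gamma,n})\rho_{BC})^2\right)}.
\end{equation*}
The $A$-factor is bounded by $\frac{(d_A-1)(d_A^2+M_A^2x_A)}{d_AM_A(M_A-1)}$ using \eqref{re:1}. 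For the $BC$-factor, the key computation is that $\sum_{\beta,j,\gamma,n}\mathrm{tr}((E_{\beta,j}\otimes E_{\gamma,n})\rho_{BC})^2$ factors through the tensor structure: writing $p_{\beta,j}^{(B)}=\mathrm{tr}(E_{\beta,j}\rho_B)$, $p_{\gamma,n}^{(C)}=\mathrm{tr}(E_{\gamma,n}\rho_C)$ for the product state, the double sum equals $\big(\sum_{\beta,j}(p_{\beta,j}^{(B)})^2\big)\big(\sum_{\gamma,n}(p_{\gamma,n}^{(C)})^2\big)$, each factor again bounded by the corresponding constant via \eqref{re:1}. This yields precisely the product of the two $B$- and $C$-constants appearing in \eqref{Eq:44}. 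Finally, by the convexity step (the bound is preserved under mixtures since the generic vectors $\tau,\sigma$ only enter through quadratic quantities that are themselves bounded uniformly), \eqref{Eq:44} holds for every fully separable $\rho_{ABC}$. The inequalities \eqref{Eq:45} and \eqref{Eq:46} follow by the same argument with the roles of $A$, $B$, $C$ permuted, and the contrapositive gives the stated entanglement witness.

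The main obstacle—really the only nontrivial point—is justifying the convexity reduction carefully: unlike in the pure product case, for a mixed separable state $\mathcal{P}(\rho_{A|BC})$ is a sum of rank-one terms $\tau_i\sigma_i^T$, and $\mathscr{M}_{\mu,\nu}^l$ is \emph{not} literally rank one, so one must argue that $\|\mathscr{M}_{\mu,\nu}^l(\sum_i p_i\rho_i^A\otimes\rho_i^{BC})\|_{\mathrm{tr}}\le\sum_i p_i\|\mathscr{M}_{\mu,\nu}^l(\rho_i^A\otimes\rho_i^{BC})\|_{\mathrm{tr}}$ and that each term obeys the same bound. The first part is the triangle inequality applied to the linear map $\rho\mapsto\mathscr{M}_{\mu,\nu}^l(\rho)$ (noting $\tau,\sigma$ depend linearly on the marginals and the $\mu\nu J_{l\times l}$ block is constant, so one should instead split off the constant block or absorb it, exactly as done in Refs.~\cite{shi2023family,sun2025separability} and in the bipartite case here). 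The second part is immediate because the per-term bound is a fixed constant independent of $i$. I would simply cite the bipartite argument of Theorem~\ref{th:1} for this bookkeeping rather than repeat it, so the tripartite proof is essentially a one-paragraph reduction plus the observation that the sum of squared product-POVM probabilities factorizes.
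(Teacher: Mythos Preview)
Your proposal is correct and follows essentially the same route as the paper: treat $BC$ as a single subsystem with the product POVM, factor $\mathscr{M}_{\mu,\nu}^l$ as a rank-one matrix on pure product states, apply submultiplicativity of the trace norm, and bound the $BC$-factor by observing that the sum of squared product-POVM probabilities factorizes into the $B$- and $C$-sums, each controlled by \eqref{re:1}. Your explicit handling of the convexity reduction (noting that $\rho\mapsto\mathscr{M}_{\mu,\nu}^l(\rho)$ is affine and hence preserves convex combinations, so the triangle inequality applies) is in fact more careful than the paper's own proof, which treats only the pure product case and leaves the extension to mixtures implicit.
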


\begin{proof}
To begin, we first prove that for any bipartite separable state $\rho_{AB} \in H_A \otimes H_B$,
\begin{align}\label{eq:20}
 &\nonumber\sum_{\alpha,\beta = 1}^{N} \sum_{k,j = 1}^{M} \mathrm{tr}(E_{\alpha,k} \otimes E_{\beta,j}\rho_{AB})^2\\ &\nonumber= \sum_{\alpha,\beta = 1}^{N} \sum_{k,j = 1}^{M} \mathrm{tr}(E_{\alpha,k}\rho_{A})^2 \mathrm{tr}(E_{\beta,j}\rho_{B})^2\\ &\nonumber= \sum_{\alpha}^{N} \sum_{k = 1}^{M} \mathrm{tr}(E_{\alpha,k}\rho_{A})^2 \sum_{\beta}^{N} \sum_{j = 1}^{M}\mathrm{tr}(E_{\beta,j}\rho_{B})^2 \\ &\leq(\frac{d_{A} - 1}{d_{A}} \frac{d_{A}^{2}+M_{A}^{2}x_{A}}{M_{A}(M_{A}- 1)}) (\frac{d_{B}-1}{d_{B}} \frac{d_{B}^{2}+M_{B}^{2}x_{B}}{M_{B}(M_{B}-1)}).
\end{align}
Therefore, we have
\begin{align}\label{Eq:48}
  \left\|\mathcal{M_{\mathrm{\underline{A}BC}}}_{\mu, \nu}^l \left(\rho_{ABC}\right)\right\|_{t r} \nonumber&=\left\|\left(\begin{array}{cc}\mu\nu J_{l\times l} & \mu \omega_{l}(\sigma_{BC})^T \\ \nu \omega_{l}(\tau_A)  \nonumber& \tau_A \sigma_{BC}^\mathrm{T}\end{array}\right)\right\|_{\mathrm{tr}} \\
\nonumber&=\left\|\left(\begin{array}{cc}\mu J_{l\times 1} & \\ \tau_A & \end{array}\right)\left(\begin{array}{cc}\nu J_{1\times l} & \sigma_{BC}^\mathrm{T}\end{array}\right)\right\|_{\mathrm{tr}} \\
&\leq\left\|\left(\begin{array}{cc}\mu J_{l\times 1} & \\ \tau_A  & \end{array}\right)\right\|_{\mathrm{tr}} \left\|\left(\begin{array}{cc}\nu J_{1\times l} & \sigma_{BC}^\mathrm{T}\end{array}\right)\right\|_{\mathrm{tr}} \\
\nonumber&=\left\|\left(\begin{array}{cc}\mu J_{l\times 1} & \\ \tau_A  & \end{array}\right)\right\|_{\mathrm{tr}} \left\|\left(\begin{array}{cc}\nu J_{1\times l} & \sigma_{BC}^\mathrm{T}\end{array}\right)\right\|_{\mathrm{tr}} \\
\nonumber&\leq \sqrt{\left(l \mu^{2}+\frac{(d_{A}-1)(d_{A}^2+M_{A}^2x_A)}{d_{A}M_{A}(M_{A}-1)}\right)\left(l \nu^{2}+\frac{(d_{B}-1)(d_{B}^2+M_{B}^2x_B)}
{d_{B}M_{B}(M_{B}-1)}\frac{(d_{C}-1)(d_{C}^2+M_{C}^2x_C)}{d_{C}M_{C}(M_{C}-1)}\right)},
\end{align}
where the last inequality is due to ({\ref{re:1}}) and ({\ref{eq:20}}).
The last two inequalities  can be proved in a similar way.
\end{proof}


For an $N$-partite state $\rho_{A_1 A_2 \ldots A_{\mathcal{N}}}$ in $H_{A_1} \otimes H_{A_2} \otimes \ldots \otimes H_{A_{{N}}}$, we have  the a similar conclusion.

\begin{theorem}\label{th:4}
 If a quantum state
$\rho_{A_1 A_2 \cdots A_{\mathcal{N}}}\in H_{A_1} \otimes H_{A_2} \otimes \ldots \otimes H_{A_{{N}}}$
 violates the following inequality, it must be entangled,
\begin{eqnarray}\label{eq:15}
  \left\|\mathcal{M_{\mathrm{\underline{A_{q}}A_{1}\ldots A_{q-1}A_{q+1}\ldots A_{{N}}}}}_{\mu, \nu}^l \left(\rho_{{A_1 A_2 \ldots A_{\mathcal{N}}}}\right)\right\|_{\mathrm{tr}} &\leq \sqrt{\left(l \mu^{2}+\frac{(d_{A_q}-1)(d_{A_q}^2+M_{A_q}^2x_A)}{d_{A_q}M_{A_q}(M_{A_q}-1)}\right)\left(l \nu^{2}+\prod_{i=1,i \neq q }^{{N}}\frac{d_{A_i} - 1}{d_{A_i}} \frac{d_{A_i}^2 + M_{A_i}^2 x_{A_i}}{M_{A_i} (M_{A_i} - 1)}\right)}.
\end{eqnarray}
\end{theorem}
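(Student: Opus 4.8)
The plan is to reduce the $\mathcal{N}$-partite statement to a bipartition argument identical in structure to the one used for Theorem~\ref{th:1} and Theorem~\ref{th:2}. Fix the distinguished party $A_q$ and regard $\rho_{A_1 A_2 \cdots A_{\mathcal{N}}}$ as a bipartite state across the cut $A_q \,|\, A_1 \cdots A_{q-1} A_{q+1} \cdots A_{\mathcal{N}}$. If $\rho$ is fully separable, $\rho = \sum_i p_i \rho_i^{A_1} \otimes \cdots \otimes \rho_i^{A_{\mathcal{N}}}$, then it is in particular separable across this bipartition, with the composite system $\bar{A_q} := A_1 \cdots A_{q-1}A_{q+1}\cdots A_{\mathcal{N}}$ carrying the product local states $\bigotimes_{i\neq q}\rho_i^{A_i}$. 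So the whole problem is: establish the analogue of inequality (\ref{re:1}) for the composite measurement $\bigotimes_{i\neq q} E^{(i)}$ acting on a state of the reduced system $\bar{A_q}$.

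First I would prove the composite bound: for any state $\varrho$ on $\bigotimes_{i\neq q} H_{A_i}$, the sum of squared outcome probabilities of the product POVM $\{\bigotimes_{i\neq q} E_{\alpha_i,k_i}^{(i)}\}$ satisfies
\begin{equation}\label{eq:composite}
\sum \mathrm{tr}\!\left(\varrho \bigotimes_{i\neq q} E_{\alpha_i,k_i}^{(i)}\right)^{\!2} \le \prod_{i=1,\,i\neq q}^{\mathcal{N}} \frac{d_{A_i}-1}{d_{A_i}}\cdot\frac{d_{A_i}^2 + M_{A_i}^2 x_{A_i}}{M_{A_i}(M_{A_i}-1)}.
\end{equation}
For a product state $\varrho = \bigotimes_{i\neq q}\varrho_i$ this factorizes immediately and each factor is bounded by (\ref{re:1}), giving (\ref{eq:composite}); the extension to general $\varrho$ follows by convexity of the left-hand side in $\varrho$ (it is a sum of squares of linear functionals, hence convex, so its maximum over the state space is attained at a pure, and by a product-basis argument a product, state) — this is exactly the step carried out for two factors in (\ref{eq:20}) of the proof of Theorem~\ref{th:2}, now iterated. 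Once (\ref{eq:composite}) is in hand, the bound $\|\omega_l(\sigma_{\bar{A_q}})^T\|$-type estimate for the composite block of $\mathscr{M}_{\mu,\nu}^l$ reads $l\nu^2 + \prod_{i\neq q}\frac{d_{A_i}-1}{d_{A_i}}\frac{d_{A_i}^2+M_{A_i}^2 x_{A_i}}{M_{A_i}(M_{A_i}-1)}$.

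Next I would run the matrix-factorization argument verbatim. On the full-separability assumption, across the cut the probability matrix $\mathcal{P}$ equals $\tau_{A_q}\sigma_{\bar{A_q}}^{T}$ (for a product state; the convex-sum case is absorbed into the trace-norm inequality exactly as in the earlier proofs — more carefully, one uses that separability gives $\mathcal{M}_{\mu,\nu}^l = \sum_i p_i$ (rank-one blocks) so its trace norm is at most $\sum_i p_i \sqrt{(l\mu^2 + c_{A_q})(l\nu^2 + \prod_{i\neq q} c_{A_i})} \le \sqrt{(\cdots)(\cdots)}$ by convexity of $\sqrt{\cdot}$ and Jensen; actually the clean route is the pure-state computation plus the submultiplicativity $\|XY\|_{\mathrm{tr}}\le\|X\|_{\mathrm{tr}}\|Y\|_{\mathrm{tr}}$ as in (\ref{Eq:20})). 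Factor $\mathscr{M}_{\mu,\nu}^l(\rho) = \bigl(\begin{smallmatrix}\mu J_{l\times 1}\\ \tau_{A_q}\end{smallmatrix}\bigr)\bigl(\begin{smallmatrix}\nu J_{1\times l} & \sigma_{\bar{A_q}}^{T}\end{smallmatrix}\bigr)$, apply submultiplicativity of the trace norm, compute $\|(\mu J_{l\times 1};\tau_{A_q})\|_{\mathrm{tr}}^2 = l\mu^2 + \|\tau_{A_q}\|^2 \le l\mu^2 + \frac{(d_{A_q}-1)(d_{A_q}^2+M_{A_q}^2 x_{A_q})}{d_{A_q}M_{A_q}(M_{A_q}-1)}$ via (\ref{re:1}) applied to $\rho_{A_q}$, and $\|(\nu J_{1\times l} \ \sigma_{\bar{A_q}}^{T})\|_{\mathrm{tr}}^2 = l\nu^2 + \|\sigma_{\bar{A_q}}\|^2$ bounded using (\ref{eq:composite}). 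Multiplying and taking square roots yields (\ref{eq:15}).

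The main obstacle is the rigorous justification of (\ref{eq:composite}) for non-product $\varrho$ on the composite system: the paper's proof of Theorem~\ref{th:2} writes the convexity step somewhat tersely for two factors, and for $\mathcal{N}-1$ factors one must be careful that the maximum of $\varrho \mapsto \sum (\mathrm{tr}\,\varrho\,\Pi)^2$ over the (compact convex) state space is attained at an extreme point — a rank-one projector — and then that a rank-one projector, while not a product state in general, still obeys the bound because the relevant quantity only depends on $\varrho$ through its expansion in the product operator basis $\{G_0^{(i)},G_{\alpha_i,k_i}^{(i)}\}$ and the informational-completeness/orthogonality relations (\ref{Eq:1}) force the coefficient-sum inequality regardless of correlations. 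An alternative, cleaner line is to prove (\ref{eq:composite}) directly: expand $E_{\alpha,k} = \frac1M I + tH_{\alpha,k}$ for each tensor factor, expand $\varrho$ in the corresponding orthonormal Hermitian basis of the composite space, and observe that $\sum_{\alpha,k}\mathrm{tr}(H_{\alpha,k}X)\,\mathrm{tr}(H_{\alpha,k}Y)$ is a fixed positive multiple of $\mathrm{tr}(XY)$ plus lower-order terms — this gives the bound for one factor, and the product structure of the measurement makes the composite sum a product of such single-factor quadratic forms, each bounded by its maximal eigenvalue contribution, reproducing (\ref{re:1}) factorwise. Either way, once (\ref{eq:composite}) is secured the rest is the now-standard block-matrix/trace-norm manipulation and carries no further difficulty; the remaining $\mathcal{N}-1$ choices of the distinguished index $q$ follow by symmetry.
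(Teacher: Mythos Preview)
Your approach is correct and essentially identical to the paper's: the paper does not write out a separate proof of Theorem~\ref{th:4} but indicates it follows the pattern of Theorem~\ref{th:2}, namely bipartition at $A_q$, the product bound on the composite block generalizing (\ref{eq:20}), and the rank-one factorization plus trace-norm submultiplicativity as in (\ref{Eq:48}).

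One remark: the ``main obstacle'' you flag --- extending (\ref{eq:composite}) to \emph{arbitrary} states $\varrho$ on the composite system --- is not actually needed. Full separability gives $\rho=\sum_i p_i\bigotimes_j\rho_i^{A_j}$, so in every term of the convex decomposition the state on $\bar{A_q}$ is already a \emph{product} state $\bigotimes_{j\neq q}\rho_i^{A_j}$; the product-state case of (\ref{eq:composite}), which you note is immediate from (\ref{re:1}), therefore suffices. The passage to mixtures is then handled at the level of $\mathcal{M}_{\mu,\nu}^l$ itself: the map $\rho\mapsto\mathcal{M}_{\mu,\nu}^l(\rho)$ is affine (the constant block $\mu\nu J_{l\times l}$ is preserved under convex combinations since $\sum_i p_i=1$), so the triangle inequality for $\|\cdot\|_{\mathrm{tr}}$ reduces the general separable case to the pure-product one, exactly as in the bipartite proof. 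Your alternative route via the explicit $E_{\alpha,k}=\tfrac{1}{M}I+tH_{\alpha,k}$ expansion would also work and would establish the stronger, state-independent form of (\ref{eq:composite}), but it is more than the theorem requires.
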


\section{CONCLUSIONS}\label{sec:5}
We have presented separability criteria based on $(N,M)$ POVMs. Through detailed  examples we have demonstrated that our approach exhibits significant advantages in entanglement detection, compared with some existing criteria. Different from other similar approaches,  we used the $(N,M)$ POVMs measurement probabilities in Eq.(\ref{eq:3}) instead of the reduced density matrices. Our criteria allow for detecting quantum entanglement experimentally without tomography. Our results may be applied to derive new lower bounds of concurrence and inspire further investigations on the theory of quantum entanglement.

Building upon these advantages, several promising research directions emerge. Firstly, the $(N,M)$ POVMs have been extended to the generalized symmetric measurements and generalized equiangular measurements in Refs.\cite{siudzinska2024informationally,siudzinska2024much}.
It would be interesting to apply quantum measurements such as these to investigate separability in bipartite or multipartite systems \cite{KS}. Secondly, following our approach, specific matrices based on the corresponding measurement probabilities could be constructed to improve entanglement detection capability. Furthermore, improved separability criteria may lead to better lower bounds for concurrence. Thirdly, in Ref.\cite{qi2025k}, the authors analyzed $k$-partite entanglement based on the maximum unbiased measurements. Our method could be further reasonably extended to the study of $k$-partite entanglement and genuine multipartite entanglement, by a similar matrix construction of probabilities from the measurements independently on two of the subsystems.


\bigskip
\noindent{\bf Acknowledgements}
~This work is supported by the National Natural Science Foundation of China (NSFC) under Grant No. 12171044, the specific research fund of the Innovation Platform for Academicians of Hainan Province.

\end{document}